\documentclass[letterpaper, 10 pt, conference]{ieeeconf}
\IEEEoverridecommandlockouts 
\overrideIEEEmargins
\usepackage{amssymb,amsmath,graphicx,float}
\usepackage{algorithm, algpseudocode}
\usepackage{booktabs}
\usepackage{wrapfig}
\usepackage{setspace}
\usepackage{graphicx}
\usepackage{epsfig}            
\usepackage[all]{xy}                             
\usepackage{verbatim}
\usepackage{float}
\usepackage{mathrsfs}  
\usepackage{bm}
\usepackage{graphicx}
\usepackage{float}
\usepackage{booktabs}
\usepackage{algorithm, algpseudocode}
\newtheorem{theorem}{Theorem} 

\newtheorem{definition}{Definition}
 
\newtheorem{proposition}{Proposition}

\newcommand{\norm}[1]{\left\lVert#1\right\rVert}

\hyphenation{op-tical net-works semi-conduc-tor}
\newcommand{\argmin}{\mathrm{arg}\min}

\usepackage{fixltx2e}
\usepackage{tabularx}
\usepackage{verbatim}
\usepackage{color}
\usepackage{hyperref}
\usepackage{xmpmulti}
\usepackage{transparent}
\usepackage{fancyhdr}




\begin{document}
	
\title{Provably Correct Controller Synthesis of Switched Stochastic Systems with Metric Temporal Logic Specifications: A Case Study on Power Systems}

\author{Zhe~Xu\thanks{Zhe~Xu is with the School for Engineering of Matter, Transport, and Energy, Arizona State University, Tempe, AZ 85287, USA. Email: {\tt\small $\{$xzhe1@asu.edu$\}$. }}, Yichen~Zhang\thanks{Yichen~Zhang is with Argonne National Laboratory, Lemont, IL 60439, USA. Email: {\tt\small $\{$yichen.zhang@anl.gov$\}$. }} }

\maketitle 

\begin{abstract}
In this paper, we present a provably correct controller synthesis approach for switched stochastic control systems with \textit{metric temporal logic} (MTL) specifications with provable probabilistic guarantees. We first present the \textit{stochastic control bisimulation function} for switched stochastic control systems, which bounds the trajectory divergence between the switched stochastic control system and its nominal deterministic control system in a probabilistic fashion. We then develop a method to compute optimal control inputs by solving an optimization problem for the nominal trajectory of the deterministic control system with robustness against initial state variations and stochastic uncertainties. We implement our robust stochastic controller synthesis approach on both a four-bus power system and a nine-bus power system under generation loss disturbances, with MTL specifications expressing requirements for the grid frequency deviations, wind turbine generator rotor speed variations and the power flow constraints at different power lines. 			
\end{abstract}  

\IEEEpeerreviewmaketitle
\section{Introduction} \label{Introduction}
	

A \textit{switched stochastic system} \cite{liberzon2003switching,Xiang2012} consists of a set of stochastic dynamic \textit{modes} and switchings between the modes triggered by external events. Many cyber-physical systems (e.g., power systems) can be modeled as switched stochastic systems and the control synthesis of such systems with \textit{formal} specifications has been a challenging problem.

In this paper, we present a provably correct controller synthesis approach for switched stochastic control systems with \textit{metric temporal logic} (MTL) specifications. MTL has been used as specifications in power systems \cite{zhe_control}, artificial intelligence \cite{zhe_ijcai2019}, robotics \cite{Verginis2019Icra}, biology \cite{Xu2021PLOS}, etc. We first present the \textit{stochastic control bisimulation function} for switched stochastic control systems, which bounds the trajectory divergence between the switched stochastic control system and its nominal deterministic control system in a probabilistic fashion. Thus all the controller synthesis methods for the nominal deterministic system can be used for designing the optimal input signals, and the same input signals can be applied to the switched stochastic control system with a lower-bound guarantee for satisfying the MTL specifications. 

In \cite{zheACC2018wind}, we presented a coordinated control method of wind turbine generator and energy storage system for frequency regulation under temporal logic
specifications. In this paper, we extend the results in \cite{zheACC2018wind} to switched stochastic control systems, and generalize the predicates of the MTL specifications to include both the state and the input (e.g., so that line power constraints in power systems can be incorporated into the MTL specifications). Besides, we add an exponential term to the stochastic control bisimulation function so that both stable and unstable linear dynamics can be approached with less conservativeness. 

We implement the proposed controller synthesis approach in two scenarios in power systems. The results show that the synthesized control inputs can indeed lead to satisfaction of the MTL specifications with larger empirical probabilities than 
the derived theoretical lower-bound guarantees for the satisfaction probability.

\section{Related Works} 	
There is a rich literature on controller synthesis with temporal logic specifications in the stochastic environment
\cite{Nonlinear2017,Wolff2012}. For discrete-time temporal logics such as co-safe linear temporal logics (LTL), the specifications can be converted to finite state machines, then the optimal control strategy is computed in the state space augmented with the states of the constructed finite state machines \cite{Horowitz2014}. For dense-time temporal logics such as MTL or signal temporal logics (STL), the specifications can be converted to timed automata \cite{AlurDill1990,Fu2015CDC} and the optimal control strategy is computed in the state space augmented with the states of the constructed timed automata. In \cite{Anand2019}, the authors proposed a verification approach of switched stochastic systems with LTL specifications. However, as far as we know, there has been no work on controller synthesis of switched stochastic systems with (dense-time) temporal logic specifications.

\section{Preliminaries} 
	\label{Preliminaries}  
\subsection{Switched Stochastic Control Systems}
\label{switch} 
\begin{definition}[Switched Stochastic Control Systems]
A \textbf{switched stochastic control system} is a 6-tuple $\mathcal{T} = (\mathcal{Q},\mathcal{X},\mathcal{X}_0,\mathcal{V},\mathcal{F},\mathcal{E})$ where
\begin{itemize}
	\item $\mathcal{Q}=\{1,2,\dots,M\}$ is the
	set of indices for the modes (or subsystems);
	\item $\mathcal{X}$ is the domain of the continuous state, $x\in \mathcal{X}$ is the continuous state of the system, $\mathcal{X}_0\subset\mathcal{X}$ is the initial set of states;	
	\item $\mathcal{V}$ is the domain of the input, $u\in\mathcal{V}$ is the input of the system;		
	\item $\mathcal{F}=\{(f_{q},g_{q})\vert q\in \mathcal{Q}\}$ where $f_{q}$ describes the continuous
	time-invariant dynamics for the mode $dx=f_{q}(x,u)dt+g_{q}(x,u)dw$, which
	admits a unique solution $\xi_{q}(t;x_{q}^0,u)$, where $\xi_{q
	}$ satisfies $d\xi_{q}(t;x_{q}^0,u)= f_{q}(\xi_{q}(t;x_{q}^0,u),u)dt+g_{q}(\xi_{q}(t;x_{q}^0,u),u)dw$, and $\xi_{q}(0;x_{q}^0,u)=x_{q}^0$ is an initial condition in mode $q$;
	\item $\mathcal{E}$ is a subset of $\mathcal{Q}\times \mathcal{Q}$
	which contains the valid transitions. If a transition $e = (q, q')\in\mathcal{E}$ takes place, the system switches from mode $q$ to $q'$.
\end{itemize}
\label{sw}
\end{definition}                      
Similarly, we can define the \textbf{switched nominal control system} $\mathcal{T}^{\ast} = (\mathcal{Q},\mathcal{X},\mathcal{X}_0,\mathcal{V},\mathcal{F}^{\ast},\mathcal{E})$ of $\mathcal{T}$, and $\mathcal{T}^{\ast}$  only differs from $\mathcal{T}$ as $\mathcal{F}^{\ast}=\{f_{q}\vert q\in \mathcal{Q}\}$, where $dx^{\ast}=f_{q}(x^{\ast},u)dt$ is the nominal deterministic version of $dx=f_{q}(x,u)dt+g_{q}(x,u)dw$.

  
\begin{definition}[Trajectory]
	\label{def_traj}	
	A trajectory of a stochastic switched control system $\mathcal{T}$ is denoted as a sequence $\rho=\{(q^{i},\xi_{q^{i}}(t;x^{0}_{q^i},u),T^{i})\}_{i=0}^{N_{q}}$ ($N_{q}\in\mathbb{N}$),	
	where  
		\begin{itemize}  
		\item $\forall i\ge0$, $q^{i}\in \mathcal{Q}$, $x^{0}_{q^i}\in\mathcal{X}$ is the initial state at mode $q^i$, $x_0=x^{0}_{q^0}\in \mathcal{X}_0$ is the initial state of the entire trajectory, $x^{i+1}=\xi_{q^{i}}(T^{i};x^{0}_{q^i},u)$ is the initial state at mode $q^{i+1}$;
		\item $\forall i\ge0$, $T^{i}>0$ is the dwell time at mode $q^{i}$, while the transition times are $T^0, T^0+T^1, \dots,T^0+T^1+\dots+T^{N_q-1}$;
		\item  $\forall i\ge0$, $(q^{i},q^{i+1})\in \mathcal{E}$.
		\end{itemize} 
\end{definition} 
A trajectory of a switched nominal control system $\mathcal{T}^{\ast}$ can be similarly denoted as a sequence $\rho^{\ast}=\{(q^{i},\xi^{\ast}_{q^{i}}(t;x^{\ast0}_{q^i},u),T^{i})\}_{i=0}^{N_{q}}$ ($N_{q}\in\mathbb{N}$).
	
\begin{definition}[Output Trajectory]
	\label{traj}
For a trajectory $\rho=\{(q^{i},\xi_{q^{i}}(t;x^{0}_{q^i},u),T^{i})\}_{i=0}^{N_{q}}$, we define the output trajectory $s_\rho(\cdot;x_0,u)$ (here we denote $x_0\triangleq x^{0}_{q^0}$ for brevity) as follows:
\[
 	s_\rho(t;x_0,u) =\begin{cases}
 	\xi_{q^0}(t;x_{0},u), ~~~~~~~~~~~~~~~~~~~~~~~~\mbox{if $t<T^0$},\\
 	\xi_{q^i}(t-\displaystyle{\sum_{k=0}^{i-1}}T^k;x^{0}_{q^i},u),\\
 	~~~~~~~~~~\mbox{if $\displaystyle{\sum_{k=0}^{i-1}}T^k\le t<\sum_{k=0}^{i}T^k$, $1\le i\le N_{q}$}.
 	\end{cases} \\
\]                         
\end{definition} 

The output trajectory of a trajectory $\rho^{\ast}=\{(q^{i},\xi^{\ast}_{q^{i}}(t;x^{\ast0}_{q^i},u),T^{i})\}_{i=0}^{N_{q}}$ of a switched nominal control system is denoted as $s_{\rho^{\ast}}(\cdot;x^\ast_0,u)$.
	
\subsection{Metric Temporal Logic (MTL)}
	\label{MTL}   
In this subsection, we briefly review the MTL that are interpreted over continuous-time signals~\cite{FAINEKOScontinous}.
The domain of the continuous state $x$ is denoted by $\mathcal{X}$. The domain $\mathbb{B} = \{$True, False$\}$ is the Boolean domain and the time set
is $\mathbb{T} = \mathbb{R}$. The output trajectory $s_\rho(\cdot;x_0,u)$ of a switched system is defined in Sec. \ref{switch}. A set $AP=\{\pi_1,\pi_2,\dots \pi_n\}$ is a set of atomic propositions, each mapping $\mathcal{X}$ to $\mathbb{B}$. The syntax of MTL is defined recursively as follows: 
    \[
	\varphi:=\top\mid \pi\mid\lnot\varphi\mid\varphi_{1}\wedge\varphi_{2}\mid\varphi_{1}\vee
	\varphi_{2}\mid\varphi_{1}\mathcal{U}_{\mathcal{I}}\varphi_{2},   
	\]                           
where $\top$ stands for the Boolean constant True, $\pi$ is an atomic
proposition, $\lnot$ (negation), $\wedge$ (conjunction), $\vee$ (disjunction)
are standard Boolean connectives, $\mathcal{U}_{\mathcal{I}}$ is a temporal operator
representing \textquotedblleft until\textquotedblright, $\mathcal{I}$ is a time interval of
the form $\mathcal{I}=[i_{1},i_{2}]~(i_{1},i_{2}\in \mathbb{R}_{\geqslant 0}, i_{1}\le i_{2})$. From \textquotedblleft
until\textquotedblright ($\mathcal{U}_{\mathcal{I}}$), we can derive the temporal operators \textquotedblleft
eventually\textquotedblright~$\Diamond_{\mathcal{I}}\varphi=\top\mathcal{U}_{\mathcal{I}}\varphi$ and
	\textquotedblleft always\textquotedblright~$\Box_{\mathcal{I}}\varphi=\lnot\Diamond_{\mathcal{I}}\lnot\varphi$.
	
	We define the set of states that satisfy the atomic proposition $\pi$ as $\mathcal{O}(\pi)\subset \mathcal{X}$. For a set $S\subseteq\mathcal{X}$, we define the signed distance from $x$ to $S$ as
	\begin{equation}
	\textbf{Dist$_d(x,S)\triangleq$}
	\begin{cases}
	-\textrm{inf}\{d(x, y)\vert y\in cl(S)\},& \mbox{if $x$ $\not\in S$};\\  
	\textrm{inf}\{d(x, y)\vert y\in\mathcal{X}\setminus S\}, & \mbox{if $x$ $\in S$},
	\end{cases}                        
	\label{sign}
	\end{equation}
where $d$ is a metric on $\mathcal{X}$ and $cl(S)$ denotes the closure of the set $S$. In this paper, we use the metric $d(x,y)=\norm{x-y}$, where $\left\Vert\cdot\right\Vert $ denotes the 2-norm. 

We use $\left[\left[\varphi\right]\right](s_\rho(\cdot;x_0,u), t)$ to denote the robustness degree of the output trajectory $s_\rho(\cdot;x_0,u)$ with respect to the formula $\varphi$ at time $t$. We denote $-\mathcal{I}\triangleq[-i_{2},-i_{1}]$ when $\mathcal{I}=[i_{1},i_{2}]$. The robust semantics of a formula $\varphi$ with respect to $s_\rho(\cdot;x_0,u)$ are defined recursively as follows~\cite{Dokhanchi2014}:
	\begin{align}  
	\begin{split}
	\left[\left[\top\right]\right](s_\rho(\cdot;x_0,u), t):=& +\infty,\\
	\left[\left[\pi\right]\right](s_\rho(\cdot;x_0,u), t):=&\textbf{Dist$_d(s_\rho(\cdot;x_0,u)(t),\mathcal{O}(\pi))$},\\
	\left[\left[\neg\varphi\right]\right](s_\rho(\cdot;x_0,u), t):=&-\left[\left[ \varphi\right]\right](s_\rho(\cdot;x_0,u), t),\\
	\left[\left[\varphi_1\wedge\varphi_2\right]\right](s_\rho(\cdot;x_0,u), t):=&\min\big(\left[\left[ \varphi_1\right]\right](s_\rho(\cdot;x_0,u), t),\\&\left[\left[\varphi_2\right]\right](s_\rho(\cdot;x_0,u), t)\big),\\
	\left[\left[\varphi_1\mathcal{U}_{\mathcal{I}}\varphi_{2}\right]\right](s_\rho(\cdot;x_0,u), t):=&\max_{t'\in (t+\mathcal{I})}\Big(\min\big(\left[\left[\varphi_2\right]\right](s_\rho(\cdot;x_0,u),\\&  t'), \min_{t\le t''<t'}\left[\left[\varphi_1\right]\right]
	(s_\rho(\cdot;x_0,u),t'')\big)\Big).	 
	\end{split}
	\end{align}

                        
\section{Stochastic Control Bisimulation Function}
\subsection{Stochastic Control Bisimulation Function}
We consider the switched stochastic control system with the following dynamics in the mode $q$:
\begin{align}
\begin{split}
& dx=f_q(x,u)dt+g_q(x,u)dw, 
\end{split}
\label{sys}
\end{align}
where the state $x\in\mathcal{X}\in\mathbb{R}^{n}$, the input $u\in\mathcal{V}\in\mathbb{R}^{p}$, $w$ is an $\mathbb{R}^{m}$-valued standard Brownian motion.

Note that the dynamics is essentially the same as that in \cite{Julius2008CDC} when the input signal $u(\cdot)$ is given and bounded, while the existence and uniqueness of the solution of (\ref{sys}) can be guaranteed with the conditions given in \cite{Julius2008CDC}. 

We also consider the switched nominal control system in the mode $q$ as the nominal deterministic version:
\begin{align}
& dx^{\ast}=f_q(x^{\ast},u)dt,  
\label{nom}
\end{align}
 
	\begin{definition}	
		A continuously differentiable function $\psi_q:\mathcal{X}\times\mathcal{X}\times\mathbb{T}\rightarrow \mathbb{R}_{\geqslant 0}$ is a \textbf{time-varying control autobisimulation function} of
		the switched nominal control system (\ref{nom}) in the mode $q$ if for any $x, \tilde{x} \in \mathcal{X}$ ($x\neq\tilde{x}$) and any $t\in\mathbb{T}$ there exists a function $u:\mathbb{R}^n\times\mathbb{T}\rightarrow\mathbb{R}^p$ such that $\psi_q(x,\tilde{x},t)>0$, $\psi_q(x,x,t)=0$ and $\frac{\partial{\psi_q(x,\tilde{x},t)}}{\partial{x}}f_q(x,u(x,t))+\frac{\partial{\psi_q(x,\tilde{x},t)}}{\partial{\tilde{x}}}f_q(\tilde{x},u(\tilde{x},t))+\frac{\partial{\psi_q(x,\tilde{x},t)}}{\partial{t}}\leq0$.
	\end{definition} 
	
In the following, we extend the concept of control autobisimulation function to the stochastic setting.
	
	\begin{definition}	
		A twice differentiable function $\phi_q$: $\mathcal{X}\times\mathcal{X}\rightarrow \mathbb{R}_{\geqslant 0}$ is a \textbf{stochastic control bisimulation function} between (\ref{sys}) and its nominal system (\ref{nom}) if it satisfies \cite{zheACC2018wind}
		\begin{align}
		\begin{split}
		&\phi_q(x,\tilde{x},t)>0, \forall x,\tilde{x}\in \mathcal{X},  x\neq\tilde{x}, \forall t\in \mathbb{T},\\
		&\phi_q(x, x, t)=0, \forall x\in \mathcal{X}, \forall t\in \mathbb{T}, 
		\end{split}
		\end{align}
		and there exist $\alpha_q>0$ and a function
		$u:\mathbb{R}^n\times\mathbb{T}\rightarrow \mathbb{R}^p$ such that 
		\begin{align}
		\begin{split}
		&\frac{\partial\phi_q(x,\tilde{x},t)}{\partial x}f_q(x,u(x,t))+\frac{\partial\phi_q(x,\tilde{x},t)}{\partial \tilde{x}}f_q(\tilde{x},u(\tilde{x},t))\\&+\frac{\partial\phi_q(x,\tilde{x},t)}{\partial t}+\frac{1}{2}g_q^{T}(x,u(x,t))\frac{\partial^{2}\phi_q(x,\tilde{x},t)}{\partial x^{2}}g_q(x,u(x,t))\\&<\alpha_q, 
		\end{split}
		\end{align}
		for any $x, \tilde{x}\in\mathcal{X}$. 
	\end{definition}	

The stochastic control bisimulation function establishes a bound between the trajectories of system (\ref{sys}) and its nominal system (\ref{nom}). 

\subsection{Stochastic Control Bisimulation Function for Switched Linear Dynamics}	
In this subsection, we consider the switched stochastic control system with the following linear dynamics in the mode $q$:
\begin{align}
\begin{split}
& dx=(A_qx+B_qu)dt+\Sigma_q dw,                                                                                      
\end{split}
\label{syslinear}
\end{align}
where $A_q\in\mathbb{R}^{n\times n}$, $B_q\in\mathbb{R}^{n\times p}$, $\Sigma_q\in\mathbb{R}^{n\times m}$.

We can construct a stochastic control bisimulation function of the form
	\begin{center}
		$\phi_q(x, \tilde{x},t)=(x-\tilde{x})^{T}M_q(x-\tilde{x})e^{\mu_q t}$, 
	\end{center}
where $M_q$ is a symmetric positive definite matrix. In order for this function to qualify as a stochastic control bisimulation function, we need to have $M_q\succ 0$, and
	\begin{align}
	\begin{split}
	&\frac{\partial\phi_q(x, \tilde{x}, t)}{\partial x}(A_qx+B_qu)+\frac{\partial\phi_q(x,\tilde{x},t)}{\partial \tilde{x}}(A_q\tilde{x}+B_qu)\\ &+\frac{\partial\phi_q(x,\tilde{x},t)}{\partial t}+
	trace(\frac{1}{2}\Sigma_q^{T}(\frac{\partial^{2}\phi_q(x,\tilde{x},t)}{\partial x^{2}})\Sigma_q)\\ =&(x-\tilde{x})^{T}(2M_qA_q+\mu_q M_q)(x-\tilde{x})+trace(\Sigma_q^{T}M_q\Sigma_q)\\
	<&\alpha.
	\label{func}
	\end{split}
	\end{align}
	for some $\alpha_q>0$. If we pick $\alpha_q=trace(\Sigma_q^{T}M_q\Sigma_q)$, the inequality (\ref{func}) becomes a linear matrix inequality (LMI)
	\begin{align}
		& A_q^{T}M_q+M_qA_q+\mu_q M_q\prec 0. 
		\label{LMI} 
	\end{align}
We denote the system trajectory starting from $x_0$ with the input signal $u(\cdot)$ as $\xi({\bm\cdot};x_0,u)$. It can be seen that (\ref{func}) holds for any input signal $u(\cdot)$, so $u(\cdot)$ is free to be designed.
It can also be verified that $\psi_q(x,\tilde{x},t)=\phi_q(x, \tilde{x},t)=(x-\tilde{x})^{T}M_q(x-\tilde{x})$ is also a time-varying control autobisimulation function of the nominal system  
\begin{align}
\begin{split}
& dx^{\ast}=(A_qx^{\ast}+B_qu)dt. 
\end{split}
\label{nomlinear}
\end{align}

\begin{proposition}                                     
	Given the dynamics of (\ref{nomlinear}), $\psi_q(x, \tilde{x},t) = (x-\tilde{x})^TM_q(x-\tilde{x})e^{\mu_q t}$ is an autobisimulation function if the matrix $M_q$ satisfies the following:
	\begin{align}
	\begin{split}
	&~~~~~~~~M_q\succ 0, ~A_q^TM_q+M_qA_q+\mu_q M_q\preceq0. 
	\end{split}                                   
	\end{align}                  
\end{proposition}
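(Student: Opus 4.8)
The plan is to verify, directly for the candidate $\psi_q(x,\tilde x,t)=(x-\tilde x)^TM_q(x-\tilde x)e^{\mu_q t}$, the three defining conditions of a time-varying control autobisimulation function of the nominal linear system \eqref{nomlinear}. The positivity and vanishing conditions are immediate: since $M_q\succ 0$, the quadratic form $(x-\tilde x)^TM_q(x-\tilde x)$ is strictly positive whenever $x\neq\tilde x$ and equals $0$ when $x=\tilde x$, and multiplying by the strictly positive scalar $e^{\mu_q t}$ preserves both properties for every $t\in\mathbb{T}$.

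For the differential inequality, I would first compute the partial derivatives
$\frac{\partial\psi_q}{\partial x}=2(x-\tilde x)^TM_q e^{\mu_q t}$,
$\frac{\partial\psi_q}{\partial\tilde x}=-2(x-\tilde x)^TM_q e^{\mu_q t}$, and
$\frac{\partial\psi_q}{\partial t}=\mu_q(x-\tilde x)^TM_q(x-\tilde x)e^{\mu_q t}$.
Substituting $f_q(x,u)=A_qx+B_qu$ into the left-hand side of the autobisimulation inequality and collecting terms yields, after factoring out $e^{\mu_q t}$,
\begin{align*}
&2(x-\tilde x)^TM_q\big(A_q(x-\tilde x)+B_q(u(x,t)-u(\tilde x,t))\big)e^{\mu_q t}\\
&\quad+\mu_q(x-\tilde x)^TM_q(x-\tilde x)e^{\mu_q t}.
\end{align*}

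The key observation is that the definition only requires \emph{some} function $u$ to exist, so I am free to choose it. Picking any open-loop input $u(x,t)=u(t)$ independent of the state makes the cross term $B_q(u(x,t)-u(\tilde x,t))$ vanish, leaving $e^{\mu_q t}(x-\tilde x)^T(2M_qA_q+\mu_q M_q)(x-\tilde x)$. Since this is a scalar, I would symmetrize it as $2(x-\tilde x)^TM_qA_q(x-\tilde x)=(x-\tilde x)^T(A_q^TM_q+M_qA_q)(x-\tilde x)$, so the expression equals $e^{\mu_q t}(x-\tilde x)^T(A_q^TM_q+M_qA_q+\mu_q M_q)(x-\tilde x)$. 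With $e^{\mu_q t}>0$ and the hypothesis $A_q^TM_q+M_qA_q+\mu_q M_q\preceq 0$, this quantity is nonpositive, which is exactly the required inequality, completing the proof.

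I do not expect a genuine obstacle here; the only mild subtleties are recognizing that the input terms can be eliminated by an open-loop choice of $u$ (exploiting the existential quantifier over $u$ in the definition), and performing the symmetrization that turns $2M_qA_q$ into $A_q^TM_q+M_qA_q$ inside the quadratic form so that the LMI hypothesis can be applied verbatim.
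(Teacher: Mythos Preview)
Your proposal is correct and follows essentially the same route as the paper's own proof: verify positivity from $M_q\succ 0$ and $e^{\mu_q t}>0$, then compute the total derivative and reduce it to $e^{\mu_q t}(x-\tilde x)^T(A_q^TM_q+M_qA_q+\mu_q M_q)(x-\tilde x)\le 0$. If anything, you are slightly more explicit than the paper in noting that an open-loop choice of $u$ kills the $B_q$ cross term and in spelling out the symmetrization step.
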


\begin{proof} 
	As $e^{\mu_q t}>0$, if $M_q\succ 0$, then for any $x, \tilde{x}\in\mathcal{X}$ and any $t$, we have $\psi_q(x, \tilde{x},t) = (x-\tilde{x})^TM_q(x-\tilde{x})e^{\mu_q t}>0$. If $A_q^TM_q+M_qA_q+\mu_q M_q\preceq0$, we have for any $x, \tilde{x}\in\mathcal{X}$ and any $t$, 
	\begin{align}\nonumber
	\begin{split}
	&\frac{\partial{\psi_q(x, \tilde{x},t)}}{\partial{x}}f_q(x)+\frac{\partial{\psi_q(x, \tilde{x},t)}}{\partial{\tilde{x}}}f_q(\tilde{x})+\frac{\partial{\psi_q(x, \tilde{x},t)}}{\partial{t}}\\
	&=(x-\tilde{x})^T(A_q^TM_q+M_qA_q+\mu_q M_q)(x-\tilde{x})e^{\mu_q t}\le 0.
	\end{split}                                   
	\end{align}   
	So $\psi_q(x, \tilde{x},t) = (x-\tilde{x})^TM_q(x-\tilde{x})e^{\mu_q t}$ is an autobisimulation function of system (\ref{nomlinear}).    
\end{proof}   

We denote the output trajectory of the nominal system starting from $x_0$ with the input signal $u(\cdot)$ as $s_{\rho^{\ast}}({\bm\cdot};x_0,u)$. 

\begin{proposition}
If $\phi_q$ is a stochastic control bisimulation function between the switched stochastic control system (\ref{syslinear}) and its switched nominal control system (\ref{nomlinear}) in the mode $q$, then for any $T>0$,
\begin{align}
& P\{\sup_{0\leq t\leq T}\phi_q(\xi_q^{\ast}(t;x^0_q,u), \xi_q(t;x^0_q,u))<\gamma\}> 1-\frac{\alpha_q T}{\gamma}.
\label{prob}
\end{align}
\end{proposition}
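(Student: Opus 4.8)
The plan is to lift the problem to the single scalar process obtained by evaluating $\phi_q$ along the coupled pair $\big(\xi_q(\cdot;x_q^0,u),\xi_q^{\ast}(\cdot;x_q^0,u)\big)$, show that this process has drift strictly below $\alpha_q$, and then convert the resulting Dynkin/optional-stopping expectation bound into the claimed tail estimate — this is the standard stochastic-barrier-certificate argument adapted to the bisimulation setting.

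First I would fix the (open-loop) input signal $u(\cdot)$ and the common initial condition $x_q^0$, and set $Y(t)\triangleq\phi_q\big(\xi_q(t;x_q^0,u),\,\xi_q^{\ast}(t;x_q^0,u),\,t\big)$, which by symmetry of $\phi_q$ in its first two arguments coincides with the quantity in the statement. The process $Y$ is nonnegative, continuous and adapted, with $Y(0)=\phi_q(x_q^0,x_q^0,0)=0$ using $\phi_q(x,x,t)=0$. Applying It\^o's formula along $dx=(A_qx+B_qu)\,dt+\Sigma_q\,dw$ and $dx^{\ast}=(A_qx^{\ast}+B_qu)\,dt$ gives $dY(t)=(\mathcal{L}\phi_q)(t)\,dt+\big(\tfrac{\partial\phi_q}{\partial x}\Sigma_q\big)dw(t)$, where $(\mathcal{L}\phi_q)(t)$ is exactly the left-hand side of the defining inequality of a stochastic control bisimulation function; in the linear case it equals $(x-\tilde x)^{\top}(2M_qA_q+\mu_qM_q)(x-\tilde x)e^{\mu_q t}+\mathrm{trace}(\Sigma_q^{\top}M_q\Sigma_q)$, which is bounded above by $\alpha_q$ in view of the LMI (\ref{LMI}) and the choice $\alpha_q=\mathrm{trace}(\Sigma_q^{\top}M_q\Sigma_q)$. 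Hence $Z(t)\triangleq Y(t)-\alpha_q t$ has nonpositive drift and is a supermartingale started at $0$.

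Next I would introduce the stopping time $\tau\triangleq\inf\{t\ge 0: Y(t)\ge\gamma\}\wedge T$, which is bounded by $T$ and, by continuity of $Y$ together with $Y(0)=0<\gamma$, is strictly positive. On the event $E\triangleq\{\sup_{0\le t\le T}Y(t)\ge\gamma\}$ path-continuity gives $Y(\tau)=\gamma$, while on $E^{c}$ we have $\tau=T$ and $Y(\tau)=Y(T)\ge 0$. Dynkin's formula at $\tau$ yields $\mathbb{E}[Y(\tau)]=Y(0)+\mathbb{E}\!\left[\int_0^{\tau}(\mathcal{L}\phi_q)(s)\,ds\right]\le\alpha_q\,\mathbb{E}[\tau]\le\alpha_q T$, with strictness inherited from $(\mathcal{L}\phi_q)(s)<\alpha_q$ along trajectories that reach level $\gamma>0$. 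On the other hand $\mathbb{E}[Y(\tau)]\ge\gamma\,\mathbb{P}(E)+0\cdot\mathbb{P}(E^{c})=\gamma\,\mathbb{P}(E)$. Combining, $\gamma\,\mathbb{P}(E)<\alpha_q T$, i.e. $\mathbb{P}\{\sup_{0\le t\le T}\phi_q<\gamma\}=1-\mathbb{P}(E)>1-\alpha_q T/\gamma$, which is the claim. (A non-strict generator bound already gives the $\ge$ version; the strict inequality in the definition upgrades it to the stated strict inequality.)

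The main obstacle is not the algebra but the justification of Dynkin's formula, equivalently that the It\^o correction term $\int_0^{t}\tfrac{\partial\phi_q}{\partial x}\Sigma_q\,dw$ is a genuine martingale rather than merely a local one, which requires controlling moments of $\xi_q(\cdot)$ on $[0,T]$. For the linear system (\ref{syslinear}) with additive noise $\Sigma_q\,dw$ this is routine — the solution has finite moments of all orders on compact intervals — and the argument is made rigorous by the usual localization: stop additionally at the exit time $\tau_n$ from the ball of radius $n$, apply Dynkin on $[0,\tau\wedge\tau_n]$ where every term is bounded, and pass to $n\to\infty$ using $\tau_n\to\infty$ a.s.\ together with Fatou's lemma for the lower bound on $\mathbb{E}[Y(\tau)]$ and dominated/monotone convergence for the drift integral. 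I would also note that the minor notational mismatch between which argument of $\phi_q$ carries the diffusion in the defining inequality versus in the statement is immaterial here, since $\phi_q(x,\tilde x,t)=(x-\tilde x)^{\top}M_q(x-\tilde x)e^{\mu_q t}$ is symmetric in $x$ and $\tilde x$.
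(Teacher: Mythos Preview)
Your argument is correct and is exactly the standard supermartingale/Dynkin-formula derivation underlying this type of bound; the paper's own proof is a one-line appeal to Proposition~2.2 of \cite{Julius2008CDC} together with~(\ref{func}), which encapsulates the same computation. You have simply unpacked that cited result in full, and your handling of the strict inequality and the localization needed to justify Dynkin's formula is appropriate.
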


  \begin{proof} 
  Straightforward from Proposition 2.2 of \cite{Julius2008CDC} and (\ref{func}).
  \end{proof}  	
  
It can be seen from (\ref{prob}) that $\phi_q$ provides a probabilistic upper bound for the distance between the states of the switched stochastic control system and its switched nominal control system in the node $q$ in a finite time horizon. We denote $B_{q}(x,\gamma)\triangleq\{\tilde{x}\vert(x-\tilde{x})^TM_q(x-\tilde{x})\le\gamma\}$.
	
\section{Stochastic Controller Synthesis}
	\label{Sec_Feedforward}
We denote the set of states that satisfy the predicate $p$ as 
$\mathcal{O}(p)\subset\mathcal{X}$. In this paper, we consider a fragment of MTL formulas in the following form:
\begin{align} 
\begin{split}
\varphi =& \Box_{[\tau_1,T_{\textrm{end}}]}p_1 \wedge \Box_{[\tau_2,T_{\textrm{end}}]}p_2 \wedge\dots \wedge\Box_{[\tau_{\eta},T_{\textrm{end}}]}p_{\eta},
\end{split}
\label{MTLform1}
\end{align}	
where $\tau_1<\tau_2<\dots\tau_{\eta}\leq T_{\textrm{end}}$, $T_{\textrm{end}}$ is the end of the simulation time, $\mathcal{O}(p_{\eta})\subset\mathcal{O}(p_{\eta-1})\subset\dots\subset\mathcal{O}(p_1)$, each predicate $p_k$ is in the following form:     
\begin{equation}
p_k\triangleq\left(\bigwedge_{\nu=1}^{n_k}a_{k,\nu}^{T}x+c_{k,\nu}^{T}u<b_{k,\nu}\right),
\label{predicate}
\end{equation}
where ${a}_{k,\nu}\in\mathbb{R}^{n}$ and $b_{k,\nu}\in\mathbb{R}$ denote the                    
parameters that define the predicate, $n_k$ is the number of atomic predicates in the           
$k$-th predicate. We constraint $\lVert{a}_{k,\nu}\rVert_{2}=1$ to reduce redundancy.

The MTL formulas in the above-defined form is actually specifying a series of regions to be entered before certain deadlines and stayed thereafter, with larger regions corresponding to tighter deadlines. The MTL formulas in this form is especially useful in power system frequency regulations as discussed in Section \ref{sec_power}. 
	
	
The $\delta_{k,\nu}$-robust modified formula $\hat{\varphi}_{\delta}$ is defined as follows:
	\begin{align}
\hat{\varphi}_{\delta}\triangleq& \Box_{[\tau_1,T_{\textrm{end}}]}\hat{p}_1 \wedge \Box_{[\tau_2,T_{\textrm{end}}]}\hat{p}_2 \wedge\dots \wedge\Box_{[\tau_{\eta},T_{\textrm{end}}]}\hat{p}_{\eta},
\label{MTLform2}
	\end{align}
where each predicate $\hat{p}_k$ is modified from (\ref{predicate}) as follows:
\begin{equation}
\hat{p}_k\triangleq\left(\bigwedge_{\nu=1}^{n_k}a_{k,\nu}^{T}x+c_{k,\nu}^{T}u<b_{k,\nu}-\Delta_{k,\nu}(t)\right),
\label{predicate2}
\end{equation} 
where
\[
 	\Delta_{k,\nu}(t) =\begin{cases}
 	\delta^0_{k,\nu}e^{-\mu_{q^0} t/2}, ~~~~~~~~~~~~~~~~~~~~~~~~\mbox{if $t<T^0$},\\
 	\delta^i_{k,\nu}e^{-\mu_{q^i} (t-\sum_{j=0}^{i-1}T^j)/2},\\
 	~~~~\mbox{if $\sum_{j=0}^{i-1}T^j\le t<\sum_{j=0}^{i}T^j$, $1\le i\le N_{q}$}.
 	\end{cases} \\
\]                                                                     
	
	\begin{theorem}	                                                        
		If for every $k\in\{1,\dots,\eta\}$ and $\nu\in\{1,\dots,n_k\}$, there exist $z^i_{k,\nu} (i=0,1,\dots,N_q), \epsilon>0$ such that $(z_{k,\nu}^{i})^2a_{k,\nu}a_{k,\nu}^{T}\preceq M_{q^i}$ and $\left[\left[\varphi_{\hat{\delta}}\right]\right](s_{\rho^{\ast}}({\bm\cdot};x^{\ast}_0,u), 0)\ge 0$, where  $\rho^{\ast}=\{(q^{i},\xi^{\ast}_{q^{i}}(t;x^{\ast 0}_{q^i},u),T^{i})\}_{i=0}^{N_{q}}$ is a trajectory of the nominal system, $\varphi_{\hat{\delta}}$ is the $\hat{\delta}_{k,\nu}$-robust modified formula of $\varphi$, $\hat{\delta}^i_{k,\nu}=(\sqrt{r_{q^i}}+\sqrt{\hat{\gamma}})/z^i_{k,\nu}$, $\hat{\gamma}=\frac{(\max\limits_{i}\alpha_{q^i})\cdot T_{\textrm{end}}}{\epsilon}$, 
		\[
		B_{q^{i-1}}(\xi_{q^{i-1}}(\tau;x^{\ast 0}_{q^{i-1}},u),r_{q^{i-1}}e^{-\mu_{q^{i-1}}T^{i-1}/2})\subset B_{q^{i}}(x^{\ast 0}_{q^i},r_{q^{i}}),
		\] 
		then for any $\tilde{x}_0\in B_{q^0}(x^{\ast}_0,r_{q^0})$, the output trajectory $s_{\tilde{\rho}}({\bm\cdot};\tilde{x}_0,u)$ of trajectory $\tilde{\rho}=\{(q^{i},\xi_{q^{i}}(t;\tilde{x}^{0}_{q^i},u),T^{i})\}_{i=0}^{N_{q}}$ satisfies MTL specification $\varphi$ with probability at least $1-\epsilon$, i.e. $P\{\left[\left[\varphi\right]\right](s_{\tilde{\rho}}({\bm\cdot};\tilde{x}_0,u), 0)\ge 0\}>1-\epsilon$.
		\label{th1}
	\end{theorem}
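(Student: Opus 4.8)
The plan is to chain together three ingredients: (i) the probabilistic trajectory bound from Proposition~3 (the one labeled with \eqref{prob}), (ii) the geometry of the sublevel sets $B_q(x,\gamma)$ together with the nesting hypothesis on consecutive modes, and (iii) the translation of the metric-ball bound into a bound on each affine predicate value, which is where the condition $(z^i_{k,\nu})^2 a_{k,\nu}a_{k,\nu}^T\preceq M_{q^i}$ enters. First I would fix an arbitrary initial perturbation $\tilde x_0\in B_{q^0}(x^\ast_0,r_{q^0})$ and consider the stochastic trajectory $\tilde\rho$ and the nominal trajectory $\rho^\ast$ driven by the \emph{same} input $u(\cdot)$ and the \emph{same} switching sequence and dwell times $T^i$. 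Applying \eqref{prob} on each mode $q^i$ with horizon $T^i$ and threshold $\hat\gamma$, and summing the failure probabilities over $i=0,\dots,N_q$, I get that with probability at least $1-\sum_i \alpha_{q^i}T^i/\hat\gamma \ge 1-(\max_i\alpha_{q^i})T_{\textrm{end}}/\hat\gamma = 1-\epsilon$, the event $\bigcap_i\{\sup_{0\le t\le T^i}\phi_{q^i}(\xi^\ast_{q^i}(t),\xi_{q^i}(t))<\hat\gamma\}$ holds; here I use $\sum_i T^i\le T_{\textrm{end}}$ and the choice $\hat\gamma=(\max_i\alpha_{q^i})T_{\textrm{end}}/\epsilon$.

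Next, on that good event I would propagate the ball inclusions inductively across modes. At mode $q^0$ the stochastic state stays within $B_{q^0}(\xi^\ast_{q^0}(t;x^\ast_0,u),\,\hat\gamma e^{-\mu_{q^0}t})$ for $0\le t\le T^0$ (because $\phi_{q^0}(x,\tilde x,t)=(x-\tilde x)^TM_{q^0}(x-\tilde x)e^{\mu_{q^0}t}<\hat\gamma$ rearranges to $(x-\tilde x)^TM_{q^0}(x-\tilde x)<\hat\gamma e^{-\mu_{q^0}t}$). Also, by the autobisimulation property of $\psi_{q^0}$ applied to the nominal system started from two points in $B_{q^0}(x^\ast_0,r_{q^0})$, the nominal-vs-nominal divergence contracts like $r_{q^0}e^{-\mu_{q^0}t/2}$... actually more carefully, the \emph{initial-condition} error on the nominal system is bounded in the $M_{q^0}$-metric by $r_{q^0}e^{-\mu_{q^0}t/2}$ in $\sqrt{\phi}$-units, i.e. the stochastic state from $\tilde x_0$ sits within radius $\sqrt{r_{q^0}}e^{-\mu_{q^0}t/2}+\sqrt{\hat\gamma}e^{-\mu_{q^0}t/2}$ of the nominal state $\xi^\ast_{q^0}(t;x^\ast_0,u)$ in the $M_{q^0}$-norm (triangle inequality: nominal-from-$\tilde x_0$ to nominal-from-$x^\ast_0$, then nominal-from-$\tilde x_0$ to stochastic-from-$\tilde x_0$; the exponent $-\mu_{q^0}t/2$ being what makes $\Delta_{k,\nu}(t)$ carry that exact decay). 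At the switching instant $t=T^0$, this places $\tilde x^0_{q^1}=\xi_{q^0}(T^0;\tilde x_0,u)$ inside $B_{q^0}(\xi^\ast_{q^0}(T^0;x^\ast_0,u),(\sqrt{r_{q^0}}+\sqrt{\hat\gamma})^2 e^{-\mu_{q^0}T^0})$, but the hypothesis $B_{q^{i-1}}(\xi_{q^{i-1}}(\tau;\cdot),r_{q^{i-1}}e^{-\mu_{q^{i-1}}T^{i-1}/2})\subset B_{q^i}(x^{\ast0}_{q^i},r_{q^i})$ is exactly what lets me re-center the error in the $M_{q^1}$-metric with the fresh radius $r_{q^1}$, so the induction closes and at every time $t$ in mode $q^i$ the stochastic state lies within $M_{q^i}$-radius $(\sqrt{r_{q^i}}+\sqrt{\hat\gamma})e^{-\mu_{q^i}(t-\sum_{j<i}T^j)/2}$ of the nominal state.

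Finally I would convert this state-space ball bound into a predicate-value bound. For a fixed atomic predicate $a_{k,\nu}^Tx+c_{k,\nu}^Tu<b_{k,\nu}$, whenever $(x-x^\ast)^TM_{q^i}(x-x^\ast)\le R^2$ and $(z^i_{k,\nu})^2 a_{k,\nu}a_{k,\nu}^T\preceq M_{q^i}$, we get $(z^i_{k,\nu})^2\big(a_{k,\nu}^T(x-x^\ast)\big)^2\le (x-x^\ast)^TM_{q^i}(x-x^\ast)\le R^2$, hence $|a_{k,\nu}^T(x-x^\ast)|\le R/z^i_{k,\nu}$. Plugging $R=(\sqrt{r_{q^i}}+\sqrt{\hat\gamma})e^{-\mu_{q^i}(t-\sum_{j<i}T^j)/2}$ gives $|a_{k,\nu}^Tx - a_{k,\nu}^Tx^\ast|\le \hat\delta^i_{k,\nu}e^{-\mu_{q^i}(t-\sum_{j<i}T^j)/2}=\Delta_{k,\nu}(t)$ with $\hat\delta^i_{k,\nu}=(\sqrt{r_{q^i}}+\sqrt{\hat\gamma})/z^i_{k,\nu}$ exactly as defined. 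Therefore, since the same $u$ is used, if the nominal trajectory satisfies the $\hat\delta$-tightened predicate $a_{k,\nu}^Tx^\ast+c_{k,\nu}^Tu<b_{k,\nu}-\Delta_{k,\nu}(t)$ at every relevant time (which is what $\left[\left[\varphi_{\hat\delta}\right]\right](s_{\rho^\ast}({\bm\cdot};x^\ast_0,u),0)\ge0$ guarantees for the $\Box$-fragment \eqref{MTLform1}), then the stochastic trajectory satisfies the original predicate, hence $\varphi$, on the good event; and the good event has probability $>1-\epsilon$. I expect the main obstacle to be the inductive ball-propagation across switches: getting the constant $\sqrt{r_{q^i}}+\sqrt{\hat\gamma}$ and the decay exponent $-\mu_{q^i}/2$ to line up cleanly requires carefully keeping the nominal-from-$\tilde x_0$ trajectory as an auxiliary object, combining the autobisimulation contraction (Proposition~1) with the stochastic bound (Proposition~3) via a triangle inequality in the time-varying $M_{q^i}$-metric, and then invoking the nesting hypothesis at exactly the switching instants so the radius can be "refreshed" to $r_{q^i}$ without loss — all the other steps are essentially algebraic.
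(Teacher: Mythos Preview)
Your proposal is correct and matches the paper's proof in all essentials: the same triangle-inequality split through the auxiliary nominal-from-$\tilde x_0$ trajectory (Proposition~1 for the deterministic leg, the bound \eqref{prob} of Proposition~2 for the stochastic leg), the same projection $(z^i_{k,\nu})^2\big(a_{k,\nu}^T(x-x^\ast)\big)^2\le (x-x^\ast)^TM_{q^i}(x-x^\ast)$ to convert the $M_{q^i}$-ball into the per-predicate margin $\Delta_{k,\nu}(t)$, and the same use of the nesting hypothesis to refresh $r_{q^i}$ for the nominal auxiliary at each switch. The only cosmetic difference is that you use a union bound on the per-mode failure probabilities, whereas the paper multiplies the per-mode success probabilities and then applies $(1-c_0)\cdots(1-c_{N_q})\ge 1-\sum_i c_i$; the resulting $1-\epsilon$ bound is identical.
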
	
	\begin{proof} 
		See Appendix.
	\end{proof}  	
	                       
From Theorem \ref{th1}, if we can design the input signal $u(\cdot)$ such that the nominal trajectory $s_{\rho^{\ast}}({\bm\cdot};x_0,u)$ of the nominal system (\ref{nom}) satisfies the $\hat{\delta}_{k,\nu}$-robust modified formula of $\varphi$ (here $\hat{\delta}^i_{k,\nu}\triangleq(\sqrt{r_{q^i}}+\sqrt{\hat{\gamma}})/z^i_{k,\nu}$ for each mode $q^i$, and $\hat{\delta}_{k,\nu}\triangleq[\hat{\delta}^0_{k,\nu}, \dots, \hat{\delta}^{N_q}_{k,\nu}]$), then all the trajectories of the system (\ref{sys}) starting from the initial set $B_{q^0}(x_0,r_{q^0})$ are guaranteed to satisfy the MTL specification $\varphi$ with probability at least $1-\epsilon$. To make the robust modification as tight as possible, for every $k\in\{1,\dots,\eta\}$ and $\nu\in\{1,\dots,n_k\}$, we compute the maximal $z^i_{k,\nu}$ such that $(z_{k,\nu}^{i})^2a_{k,\nu}a_{k,\nu}^{T} \preceq M_{q^{i}}$. We denote the maximal value of $z^i_{k,\nu}$ as $z^{i\ast}_{k,\nu}$, $\hat{\delta}^{i\ast}_{k,\nu}\triangleq(\sqrt{r_{q^{i}}}+\sqrt{\hat{\gamma}})/z^{i\ast}_{k,\nu}$, and the $\hat{\delta}^{\ast}_{k,\nu}$-robust modified formula as $\varphi_{\hat{\delta}^{\ast}}$ (the predicates in $\varphi_{\hat{\delta}^{\ast}}$ are denoted as $\hat{p}^{\ast}_k$). 
	
The optimization problem to find the optimal input signal such that the nominal trajectory satisfies the $\delta^{\ast}_{k,\nu}$-robust modified formula $\varphi_{\hat{\delta}^{\ast}}$ is formulated as follows: 
	\begin{align}
	\begin{split}
	\underset{u(\cdot)}{\argmin} ~ & J(u(\cdot)) \\
	\text{subject to} ~ &\left[\left[\varphi_{\hat{\delta}^{\ast}}\right]\right](s_{\rho^{\ast}}({\bm\cdot};x^{\ast}_0,u), 0)\ge 0.
	\end{split}
	\label{opt}
	\end{align}

The performance measure $J(u(\cdot))$ can be set as the control effort $\norm{u(\cdot)}_1$ (or $\norm{u(\cdot)}_2$). For linear systems, the above optimization problem can be converted to a a mixed-integer linear (or quadratic) programming problem, which can be more efficiently solved using techniques such as McCormick's relaxation \cite{McCormick1976,Gupte2013SolvingMI}. Furthermore, if the MTL formula $\varphi_{\hat{\delta}^{\ast}}$ consists of only conjunctions ($\wedge$) and the always operator ($\Box$), the integers in the optimization problem can be eliminated \cite{sayan2016} and the problem becomes a linear (or quadratic) programming problem. 


\section{Case Study on Power Systems}
\label{sec_power}
In this section, we implement the proposed controller synthesis approach in two scenarios in power systems.
\subsection{Scenario I}
In this subsection, we implement the controller synthesis method for regulating the grid frequency of a four-bus system with a 600 MW thermal plant $\textrm{G}_{1}$ made up of four identical units, a wind farm $\textrm{G}_{2}$ consisting of 200 identical 1.5 MW Type-C wind turbine generators (WTG) and an energy storage system (ESS), as shown in Fig. \ref{dfig}. The configuration parameters of each Type-C WTG can be found in Appendix B of \cite{PulgarPhd} (we set $C_{\mathrm{opt}}=16.1985\times10^{-9}~[s^3/rad^3]$).

\begin{figure}[ht]
	\centering
	\includegraphics[scale=0.3]{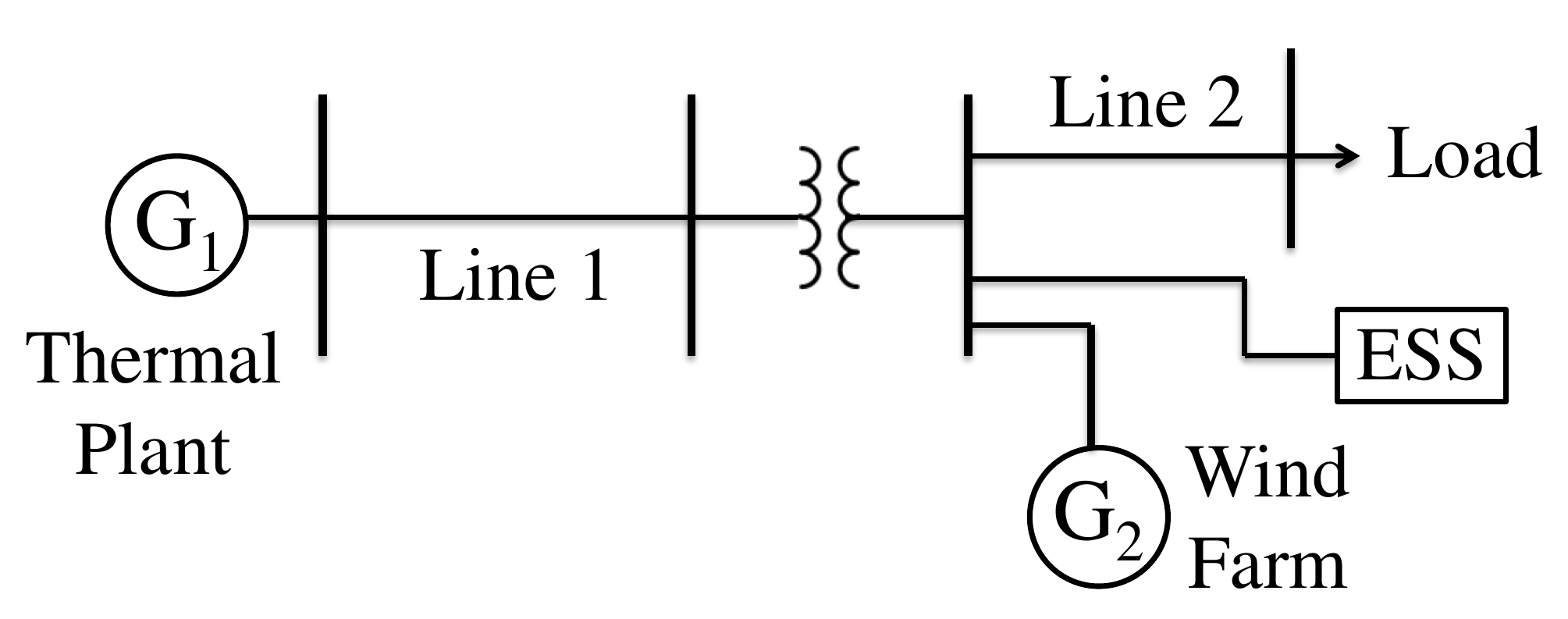}
	\caption{The four-bus system \cite{ZhangPulgar2017} with a thermal plant, a wind farm and an energy storage system (ESS).} 
	\label{dfig}
\end{figure}

By linearizing the system of differential-algebraic equations at the equilibrium point, we have

	\begin{align}
	\begin{split}                                            
	& d\begin{bmatrix}
	\Delta x \\
       0\\		
	\end{bmatrix}=
	\begin{bmatrix}
	A_s& B_s \\
	C_s& D_s \\		
	\end{bmatrix}
	\begin{bmatrix}
	\Delta x \\
    \Delta y\\		
	\end{bmatrix}dt+
	\begin{bmatrix}
	 M_s \\
	 N_s\\		
	\end{bmatrix}u^wdt+\begin{bmatrix}
	\Sigma_{s1} \\
	\Sigma_{s2}\\		
	\end{bmatrix}dw,\\
	&\Delta P_{\rm{gen}}=
	\begin{bmatrix}
	E_s ~~
	F_s\\		
	\end{bmatrix}
	\begin{bmatrix}
	\Delta x \\
	\Delta y \\		
	\end{bmatrix},
	\end{split}
	\end{align}
where $w$ is standard Brownian motion representing the stochasticity of the wind, $u^w$ is a control input through which the wind turbine generator can adjust its power output, $\Delta x=[\Delta E'_{qD}, \Delta E'_{dD}, \Delta\omega_r, \Delta x_1, \Delta x_2, \Delta x_3, \Delta x_4]^T$, $\Delta y=[\Delta P_{\rm{gen}}, \Delta Q_{\rm{gen}}, \Delta V_{dr}, \Delta V_{qr}, \Delta I_{dr}, \Delta I_{qr}, \Delta I_{ds}, \Delta I_{qs},$ $\Delta V_D, \Delta \theta_D]^T$, $\Delta E_{dD}'$, $\Delta E_{qD}'$ and $\Delta \omega_{r}$ are the $d$, $q$ axis voltage variation and rotor speed variation of the WTG, respectively, $\Delta x_{1}$ to $\Delta x_{4}$ are variations of proportional-integral (PI) regulator induced states, $\triangle P_{\mathrm{gen}}$ and $\triangle Q_{\mathrm{gen}}$ are the active and reactive power variation from each WTG, $\triangle V_{dr}, \triangle V_{qr}, \triangle I_{dr}, \triangle I_{qr}$ are the rotor $d$, $q$ axis voltage and current variation, respectively, $\triangle I_{ds}, \triangle I_{qs}$ are the stator $d$, $q$ axis current variation, respectively, and $A_s$, $B_s$, $C_s$, $D_s$, $E_s$, and $F_s$ are matrices from the linearization at the equilibrium point.

Through the Kron reduction, we have
\begin{align}
\begin{split}
& d\Delta x=A_{\rm{kr}}\Delta xdt+B_{\rm{kr}}u^wdt+\Sigma_{\rm{kr}}dw,                   \\
&\Delta P_{\rm{gen}}=C_{\rm{kr}}\Delta x+D_{\rm{kr}}u^w+E_{\rm{kr}}dw/dt,
\end{split}
\label{Kron}
\end{align}                 
where 
\begin{center}
~~~~$A_{\rm{kr}}=A_s-B_sD_s^{-1}C_s$, ~~~
$B_{\rm{kr}}=M_s-B_sD_s^{-1}N_s$, ~~~
~~~~$C_{\rm{kr}}=E_s-F_sD_s^{-1}C_s$, ~~~
$D_{\rm{kr}}=-F_sD_s^{-1}N_s$,
~~~~$\Sigma_{\rm{kr}}=\Sigma_{s1}-B_sD_s^{-1}\Sigma_{s2}$, ~~
$E_{\rm{kr}}=-F_sD_s^{-1}\Sigma_{s2}$.
\end{center}

We consider a disturbance of generation loss of 150 MW (loss of one unit), denoted as $\Delta P_{d}=0.15$, that occurs at time 0. From 0 second to 5 seconds after the disturbance, the system frequency response model of the the four-bus system is as follows (we choose base MVA as 1000MVA):
\begin{align}
\begin{cases}
&\triangle\dot{\omega}=\frac{\omega_{s}}{2H}(\triangle P_{m}+u^s+\triangle P_s-\triangle P_d+200\Delta P_{\rm{gen}}/1000\\&~~~~~~~-\frac{D}{\omega_{s}}\triangle\omega),                    \\
&\triangle\dot{P}_s=0;\\
&\triangle\dot{P}_{m}=\frac{1}{\tau_{\mathrm{c}\mathrm{h}}}(\triangle P_{v}-\triangle P_{m}),\\
&\triangle\dot{P}_{v}=\frac{1}{\tau_{g}}(-\triangle P_{v}-\frac{1}{2\pi R}\triangle\omega),
\end{cases}
\label{SG1}
\end{align}                
where $u^s$ is a control input representing the power injection from the energy storage system (ESS), $\triangle\omega$ is the grid frequency deviation, $\triangle P_{m}$ is the governor mechanical power variation, $\triangle P_{v}$ is the governor valve position variation, $\triangle P_s$ is the variation of the generator power re-dispatch, and $\triangle P_{d}$ denotes a large disturbance. $\triangle P_{\mathrm{gen}}$ times 200 as there are 200 WTGs, and it is divided by 1000 as the base MVA for each WTG and the power system are 1 MVA and 1000 MVA, respectively. We set $\omega_{s}=2\pi\times60$rad/s, $D$=1, $H$=4s, $\tau_{\mathrm{ch}}$=0.3s, $\tau_{g}$=0.1s, $R$=0.05.  

From 5 seconds to 8.75 seconds after the disturbance, the generator power re-dispatch ($\triangle P_s$) starts with a ramping rate of 0.04 with the following system frequency response model.

\begin{align}
\begin{cases}
&\triangle\dot{\omega}=\frac{\omega_{s}}{2H}(\triangle P_{m}+u^s+\triangle P_s-\triangle P_d+200\Delta P_{\rm{gen}}/1000\\&~~~~~~~-\frac{D}{\omega_{s}}\triangle\omega),                    \\
&\triangle\dot{P}_s=0.04;\\
&\triangle\dot{P}_{m}=\frac{1}{\tau_{\mathrm{c}\mathrm{h}}}(\triangle P_{v}-\triangle P_{m}),\\
&\triangle\dot{P}_{v}=\frac{1}{\tau_{g}}(-\triangle P_{v}-\frac{1}{2\pi R}\triangle\omega),
\end{cases}
\label{SG2}
\end{align}     

At 8.75 seconds, the generation and load are balanced again. So from 8.75 seconds to 10 seconds after the disturbance, the system frequency response model is the same as that in (\ref{SG1}).

With (\ref{Kron}), (\ref{SG1}) and (\ref{SG2}), we have the following switched stochastic control system with two modes corresponding to (\ref{SG1}) and (\ref{SG2}) respectively:        
\begin{align}
d\hat{x}=(\hat{A}_q\hat{x}+\hat{B}_qu)dt+\hat{\Sigma}_qdw,
\label{whole}
\end{align}  
where $\hat{x}=[\triangle E'_{qD}, \triangle E'_{dD}, \triangle\omega_r, \triangle x_1, \triangle x_2, \triangle x_3, \triangle x_4,\triangle\omega,$ $\triangle P_s, \triangle P_{m},\triangle P_{v}]^T$, the input $u=[u^w, u^s]^T$. As the matrix $\hat{A}_q$ is computed as Hurwitz for both modes, the system in each mode is stable.             

We use the following MTL specification for frequency regulation after the disturbance:	  
\begin{align}
\begin{split}
\varphi =& \Box_{[0,T_{\textrm{end}}]}p_1 \wedge \Box_{[2,T_{\textrm{end}}]}p_2,\\
p_1 =& (-0.5{\rm Hz}\leq \Delta f\leq 0.5{\rm Hz})\wedge(-10{\rm Hz}\leq \Delta f_{r}\leq 10{\rm Hz}),\\
p_2=& (-0.4{\rm Hz}\leq \Delta f\leq 0.4{\rm Hz}),
\end{split}
\label{spec}
\end{align}	
where $\Delta f=\frac{\triangle\omega}{2\pi}$, $\Delta f_{r}=\frac{\triangle\omega_{r}}{2\pi}$. The specification means ``After a disturbance, the grid frequency deviation should never exceed $\pm$0.5Hz, the WTG rotor speed deviation should never exceed $\pm$10Hz, after 2 seconds the grid frequency deviation should always be within $\pm$0.4Hz''.  
	

	\begin{table}[]
		\label{{parameter}}  
		\centering
		\caption{System parameters}
		\label{my-label}
		\begin{tabular}{lll}
			\toprule[2pt] 
			VA base $P_\textrm{b}$                 &       & 1000MVA \\ \hline
			System frequency $f_\textrm{s}$            &       & 60Hz   \\ \hline                        
			Active power flow to load $\textrm{L}_{i}$         & $i$=1       & 0.4 (pu)   \\ \hline
										& $i$=2       & 0.1 (pu)   \\ \hline
										& $i$=3       & 0.05 (pu)    \\ \hline
										& $i$=4       & 0.05 (pu)    \\ \hline
			Transformer impedance                    & $\textrm{G}_1$          & 1.8868 (pu)     \\ \hline                                   & $\textrm{G}_2$          & 0.618 (pu)    \\ \bottomrule[2pt] 
		\end{tabular}
	\end{table}     
	
	\begin{table}[]
		\label{{parameter}}  
		\centering
		\caption{Line data (1000 MVA base).}
		\label{my-label}
		\begin{tabular}{lll}
			\toprule[2pt]    
			Line number    & Line impedance (pu)   & Line charging (pu) \\ \hline
			2-8(2-9)          & j0.01 & 0.0006625\\ \hline
			7-8(7-9)          & j0.04 & 0.0023 \\ \hline
			4-8(4-9)          & j0.03 & 0.0031 \\ \hline
			4-5               & j0.03 & 0.0034  \\ \hline
			5-6               & j0.03 & 0.0094 \\ \hline
			6-7               & j0.02 & 0.0258 \\ \bottomrule[2pt]      
		\end{tabular}
	\end{table}

We set $k_w=1$, $T_{\textrm{end}}=5$ (s), $\epsilon=\alpha T_{\textrm{end}}/\hat{\gamma}=5\%$, so $\alpha=0.05\hat{\gamma}/T_{\textrm{end}}=0.01\hat{\gamma}$. As $\alpha=trace(\hat{\Sigma}^{T}M\hat{\Sigma})=k_w^2M(3,3)$, we have $\hat{\gamma}=100k_w^2M(3,3)=100M(3,3)$. We assume that the initial state variations can be covered by $B_{q^0}(\hat{x}^{\ast}_0,r)$, where $r=4\hat{\gamma}$ ($4=2^2$ is chosen as the initial state variations due to the time needed for running the algorithm to generate the controller, which is about twice the simulation time), $\hat{x}^{\ast}_0$ is zero in every dimension. It can be seen from (\ref{spec}) that the allowable variation range of the grid frequency variation $\triangle\omega$ is much smaller than that of the wind turbine rotor speed variation $\triangle\omega_{r}$. Therefore, in order to decrease the conservativeness of the probabilistic bound as much as possible, we further optimize both $z_{k,i}$ and the matrix $M_{q^i}$ such that the outer bounds of the stochastic robust neighbourhoods in the dimension of the grid frequency variation $\hat{\delta}^{i\ast}_{1,1}$ ($\hat{\delta}^{i\ast}_{1,1}= \hat{\delta}^{i\ast}_{1,2}=\hat{\delta}^{i\ast}_{2,1}=\hat{\delta}^{i\ast}_{2,2}$) are much smaller than the outer bounds in the dimension of the wind turbine rotor speed variation $\hat{\delta}^{i\ast}_{1,3}$ ($\hat{\delta}^{i\ast}_{1,3}= \hat{\delta}^{i\ast}_{1,4}$). As $\hat{\delta}_{k,i}=(\sqrt{r_{q^i}}+\sqrt{\hat{\gamma}})/z_{k,i}$ and $\hat{\gamma}=100M_{q^0}(3,3)$, minimizing $\hat{\delta}_{1,1}$ can be achieved by minimizing $M_{q^i}(3,3)~(i=1,2,3)$ and maximizing $z^i_{1,1}$. Therefore, to obtain both $M^{\ast}$ and $z^{\ast}_{1,1}$, we solve the following semidefinite programming (SDP) problem as follows.

\begin{align}
\begin{split}
&\textrm{min}. -(z^i_{1,1})^2\\	  
\textrm{s.t.} ~&  M_{q^i}\succ 0,
\hat{A}_{q^i}^{T}M_{q^i}+M_{q^i}\hat{A}_{q^i}+\mu_{q^i} M_{q^i}\preceq 0,\\
& e_3^{T}M_{q^i}e_3\leq \zeta, M_{q^i}-(z^i_{1,1})^2a_{1,1}a_{1,1}^{T}\succeq 0.
\end{split}
\label{opt1}
\end{align}
where $e_3=[0,0,1,0,0,0,0,0,0,0]^T$, $\mu_{q^i}=0.1$, $\zeta$ is tuned manually to be as small as possible while the optimization problem is feasible.

With the $M_{q^i}^{\ast}$ obtained from (\ref{opt1}), we compute the tightest outer bound in the dimension of $\triangle\omega_{r}$ as follows:

\begin{align}
\begin{split}
&\textrm{min}. -(z^i_{1,3})^2\\	  
 \textrm{s.t.} ~  
& M_{q^i}^{\ast}-(z^i_{1,3})^2a_{1,3}a_{1,3}^{T}\succeq 0.
\end{split}
\label{opt2}
\end{align}

From (\ref{opt1}) and (\ref{opt2}), we obtain the $\hat{\delta}^{\ast}_{k,i}$-robust modified formula as follows.
\[
\begin{split}
\varphi_{\hat{\delta}^{\ast}} =& \Box_{[0,T_{\textrm{end}}]}\hat{p}^{\ast}_1 \wedge \Box_{[2,T_{\textrm{end}}]}\hat{p}^{\ast}_2,\\
\hat{p}^{\ast}_1 =& (-0.5{\rm Hz}+0.217 e^{-0.01t}{\rm Hz}\leq \Delta f\\&\leq 0.5{\rm Hz}-0.217 e^{-0.01t}{\rm Hz})\wedge\\
&(-10{\rm Hz}+6.08 e^{-0.01t}{\rm Hz}\leq \Delta f_{r}\\&\leq 10{\rm Hz}-6.08 e^{-0.01t}{\rm Hz}),\\
\hat{p}^{\ast}_2=& (-0.4{\rm Hz}+0.217 e^{-0.01t}{\rm Hz}\leq \Delta f\\&\leq 0.4{\rm Hz}-0.217 e^{-0.01t}{\rm Hz}).
\end{split}
\]

\begin{figure}[th]
	\centering
	\includegraphics[width=6cm]{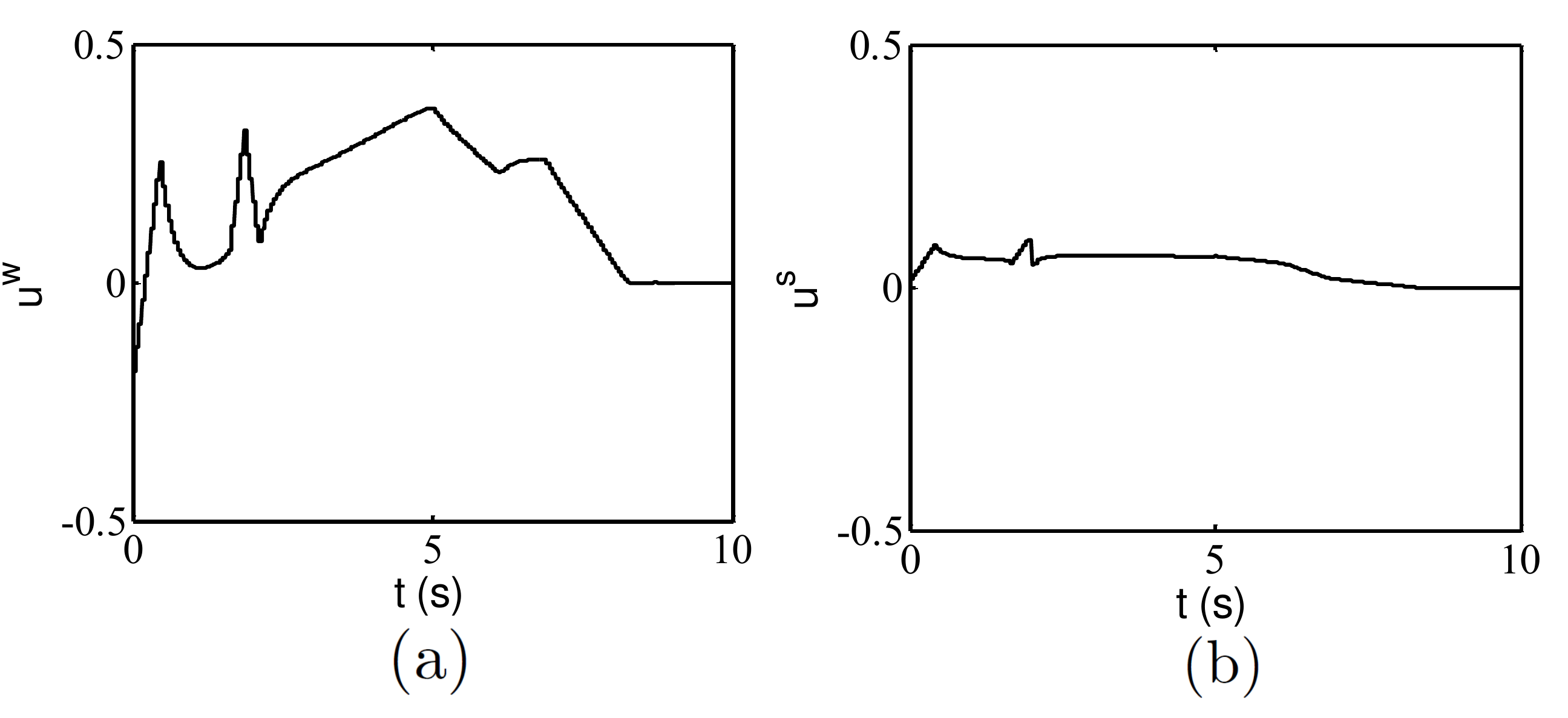}\caption{The synthesized optimal input signals in Scenario I.}
	\label{wind_u}
\end{figure}          

\begin{figure}[th]
	\centering
	\includegraphics[width=6cm]{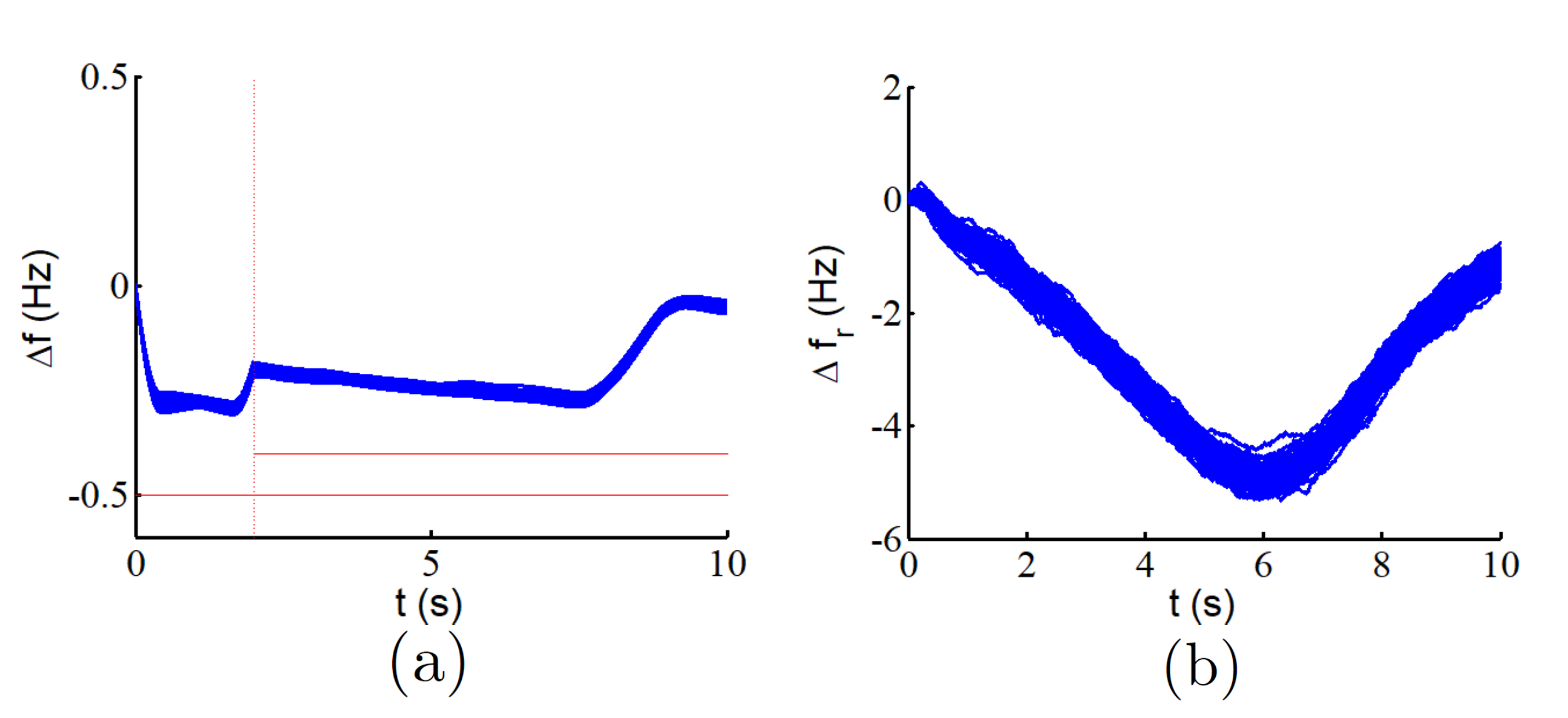}\caption{100 trajectories (realizations) of $\Delta f$ and $\Delta f_r$ with the synthesized control inputs (blue) in Scenario I. The red lines indicate the thresholds in $\varphi$.}
	\label{wind_f}
\end{figure}

We perform the controller synthesis with respect to $\varphi_{\hat{\delta}^{\ast}}$. We set $J(u(\cdot))=\norm{u^w(\cdot)}_2+\lambda\norm{u^s(\cdot)}_2$, where $\lambda=100$ (we encourage power input from the WTGs). Fig. \ref{wind_u} shows the~computed optimal input signals. Fig. \ref{wind_f} shows that all 100 trajectories (realizations) starting from $B_{q^0}(x^{\ast}_0,r)$ with the optimal input signals satisfy the specification $\varphi$.

\subsection{Scenario II}
In this section, we apply the controller synthesis method on a nine-bus system as shown in Fig. \ref{gridnew}. The thermal plant $\textrm{G}_{1}$ and the wind farm $\textrm{G}_{2}$ are the same as those in Scenario I, with two energy storage systems (ESS) placed near them respectively. Four constant power loads are denoted as $\textrm{L}_1$, $\textrm{L}_2$, $\textrm{L}_3$ and $\textrm{L}_4$. The line data can be found in Tab. I and Tab. II \cite{zhe_control}. We consider a disturbance of generation loss of 150 MW (loss of one unit, $\Delta P_{d}=0.15$) that occurs at time 0. The switched stochastic system can be written in a similar form as in (\ref{whole}), with the modes transitioning at 5 seconds and 8.75 seconds, respectively.

\begin{figure}
	\centering
	\includegraphics[scale=0.2]{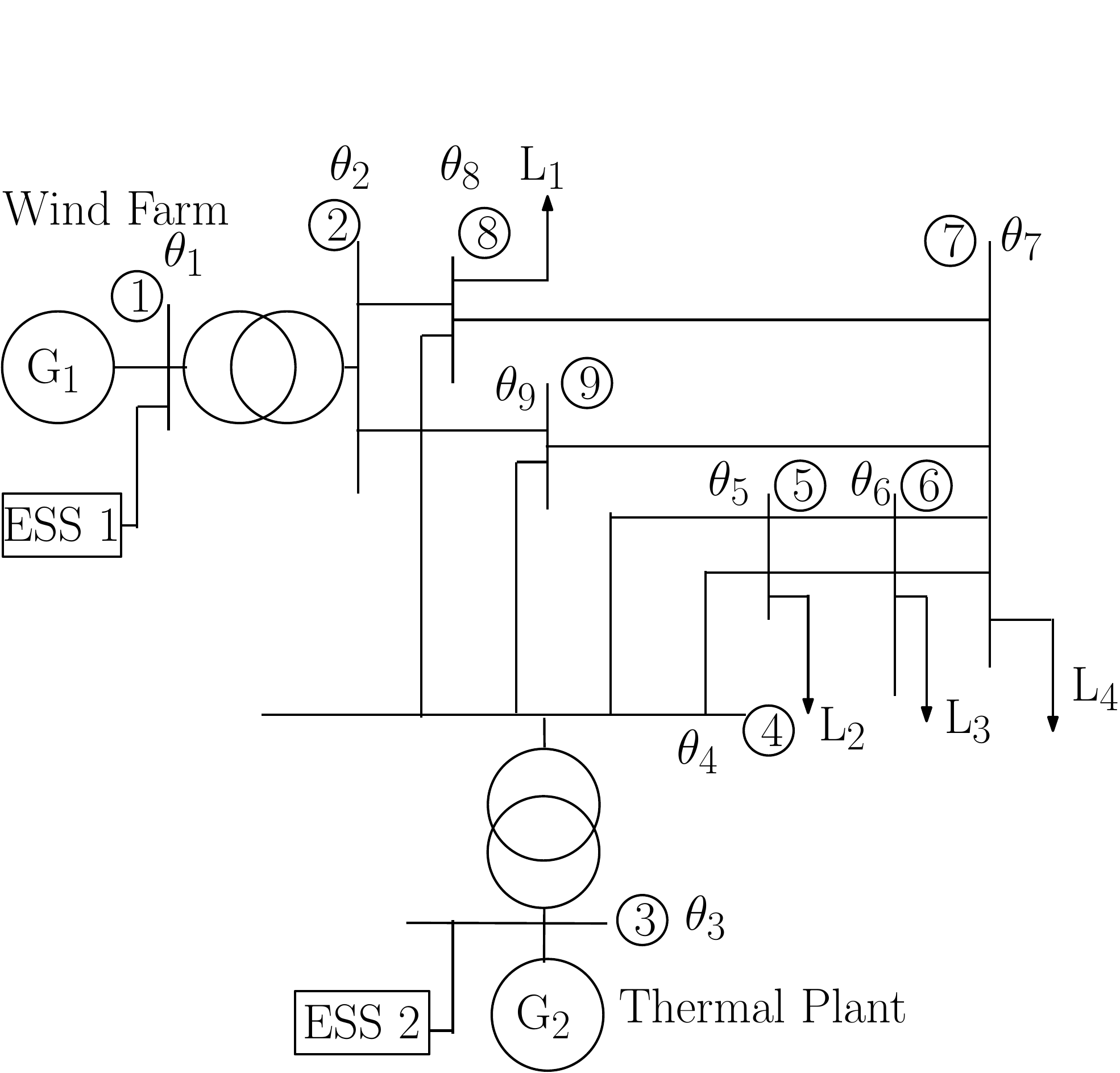}
	\caption{The nine-bus system with a thermal plant, a wind farm and two energy storage systems (ESS).} 
	\label{gridnew}
\end{figure}

We use the following MTL specification for frequency regulation after the disturbance (note that here in Scenario II, we use $\bar{\varphi}$ to show difference with $\varphi$ in Scenario I):	  
\begin{align}
\begin{split}
\bar{\varphi} =& \Box_{[0,T_{\textrm{end}}]}\bar{p}_1 \wedge \Box_{[2,T_{\textrm{end}}]}\bar{p}_2\wedge\Box_{[2,T_{\textrm{end}}]}\bar{p}_3,\\
\bar{p}_1 =& (-0.5{\rm Hz}\leq \Delta f\leq 0.5{\rm Hz})\wedge(-10{\rm Hz}\leq \Delta f_{r}\leq 10{\rm Hz}),\\
\bar{p}_2=& (-0.4{\rm Hz}\leq \Delta f\leq 0.4{\rm Hz}),\\
\bar{p}_3=&\bigwedge_{ij\in\mathcal{E}}(-0.25\leq P_{ij}\leq 0.25).
\end{split}
\label{spec2}
\end{align}	

The first two subformulas in (\ref{spec2}) are the same as in (\ref{spec}) used in Scenario I, while the third subformula $\Box_{[2,T_{\textrm{end}}]}\bar{p}_3$ specifies the real power constraints in each line. We obtain the following $\hat{\delta}^{\ast}_{k,i}$-robust modified formula:
\[
\begin{split}
\bar{\varphi}_{\hat{\delta}^{\ast}} =& \Box_{[0,T_{\textrm{end}}]}\hat{\bar{p}}^{\ast}_1 \wedge \Box_{[2,T_{\textrm{end}}]}\hat{\bar{p}}^{\ast}_2 \wedge\Box_{[2,T_{\textrm{end}}]}\hat{\bar{p}}^{\ast}_3,\\
\hat{\bar{p}}^{\ast}_1 =& (-0.5{\rm Hz}+0.217 e^{-0.01t}{\rm Hz}\leq \Delta f\\&\leq 0.5{\rm Hz}-0.217 e^{-0.01t}{\rm Hz})\wedge\\
&(-10{\rm Hz}+6.08 e^{-0.01t}{\rm Hz}\leq \Delta f_{r}\\&\leq 10{\rm Hz}-6.08 e^{-0.01t}{\rm Hz}),\\
\hat{\bar{p}}^{\ast}_2=& (-0.4{\rm Hz}+0.217 e^{-0.01t}{\rm Hz}\leq \Delta f\\&\leq 0.4{\rm Hz}-0.217 e^{-0.01t}{\rm Hz}),\\
\hat{\bar{p}}^{\ast}_3=&\bigwedge_{ij\in\mathcal{E}}(-0.25+0.0258 e^{-0.01t}\leq P_{ij}\\&\leq 0.25-0.0258 e^{-0.01t}).
\end{split}
\]
where $\mathcal{E}\subset\mathcal{N}\times\mathcal{N}$ is the set of transmission lines ($\mathcal{N}$ is the set of buses).

As there are 9 different lines corresponding to 9 different inequalities in the MTL specification, solving the optimization problem with all the inequality constraints could be computationally expensive. To reduce computation, we first set an initial MTL specification and iteratively add the line power flow inequality constraints that are violated with the previous optimization. The initial MTL specification $\bar{\varphi}^0_{\hat{\delta}^{\ast}}$ as follows (by reducing the line power flow constraints in $\bar{\varphi}_{\hat{\delta}^{\ast}}$):
\[
\begin{split}                                           
\bar{\varphi}^0_{\hat{\delta}^{\ast}} =& \Box_{[0,T_{\textrm{end}}]}\hat{\bar{p}}^{\ast}_1 \wedge \Box_{[2,T_{\textrm{end}}]}\hat{\bar{p}}^{\ast}_2.
\end{split}
\]

We perform the controller synthesis with respect to $\bar{\varphi}^0_{\hat{\delta}^{\ast}}$. We set $J(u(\cdot))=\norm{u^w(\cdot)}_2+\lambda\norm{u^s(\cdot)}_2$, where $\lambda=100$ (larger $\lambda$ encourages power input from the wind turbine generator). After the first iteration, the line 2-8 is overloaded and thus does not satisfy the line flow constraint in $\bar{\varphi}_{\hat{\delta}^{\ast}}$ (as shown in Fig. \ref{wind_line}). Thus we add line 2-8 power specification and obtain the following MTL specification $\bar{\varphi}^1_{\hat{\delta}^{\ast}}$:
\[
\begin{split}
\bar{\varphi}^1_{\hat{\delta}^{\ast}} =& \Box_{[0,T_{\textrm{end}}]}\hat{\bar{p}}^{\ast}_1 \wedge \Box_{[2,T_{\textrm{end}}]}\hat{\bar{p}}^{\ast}_2\wedge\Box_{[2,T_{\textrm{end}}]}\hat{\bar{p}}^{\ast1}_3,\\
\hat{\bar{p}}^{\ast1}_3=& (-0.25+0.0258 e^{-0.01t}\leq P_{28}\leq 0.25-0.0258 e^{-0.01t}).
\end{split}
\]
In the second iteration, the computed control inputs not only lead to satisfaction of $\bar{\varphi}^1_{\hat{\delta}^{\ast}}$, but also the satisfaction of $\bar{\varphi}_{\hat{\delta}^{\ast}}$. Thus the iteration stops and the optimal input signals are obtained (as shown in Fig. \ref{wind_u2}). Using the same $r$ as that in Scenario I, Fig. \ref{wind_f2} shows that all 100 trajectories (realizations) of $\Delta f$ and $\Delta f_r$ starting from $B_{q^0}(x^{\ast}_0,r)$ with the synthesized optimal input signals satisfy the MTL specification $\bar{\varphi}$.

\begin{figure}[th]
	\centering
	\includegraphics[width=8.5cm]{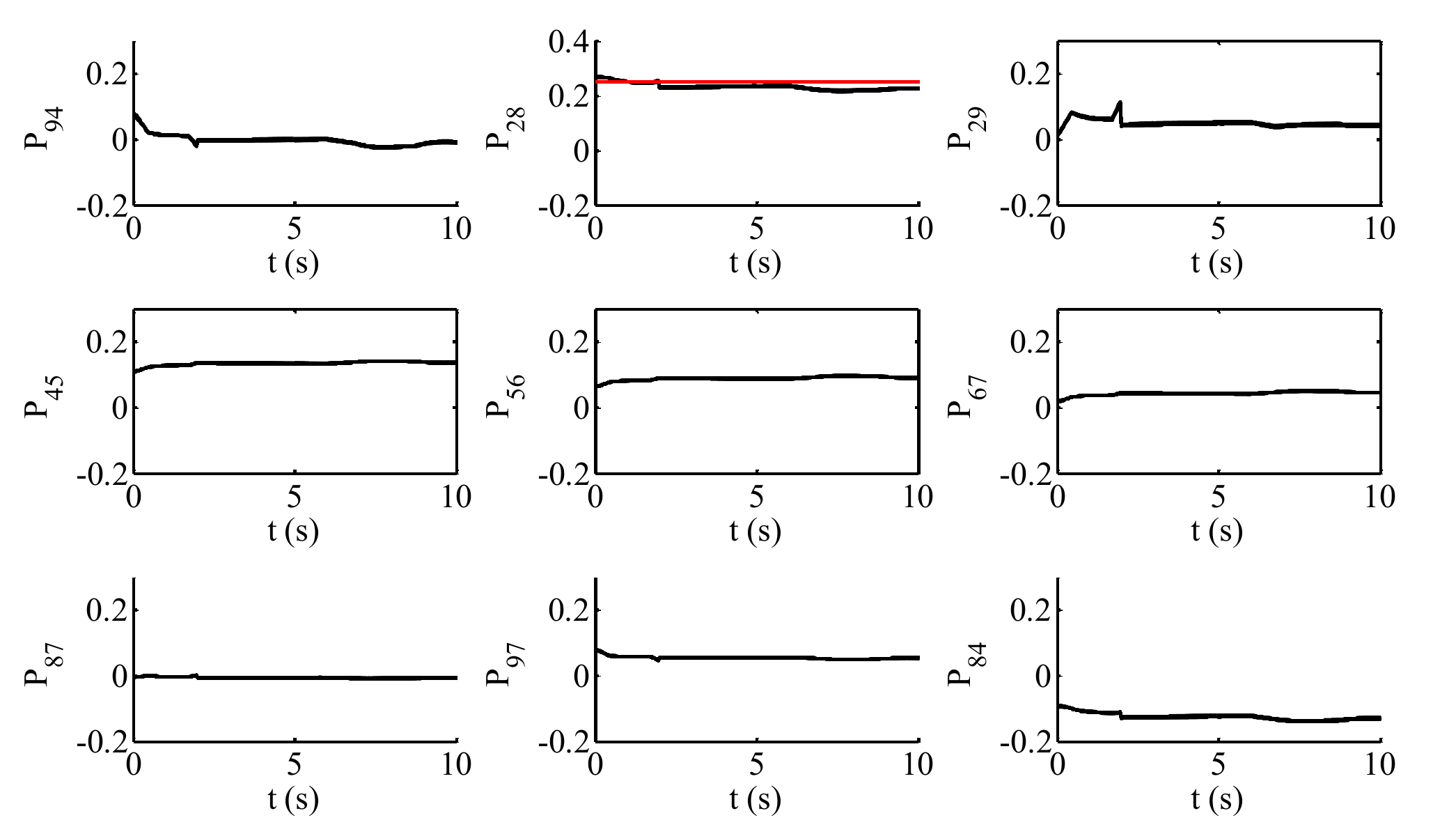}\caption{100 trajectories (realizations) of real power of 9 different lines with the synthesized control inputs (blue) after the first iteration in Scenario II.}
	\label{wind_line}
\end{figure}

\begin{figure}[th]                                          
	\centering
	\includegraphics[width=9cm]{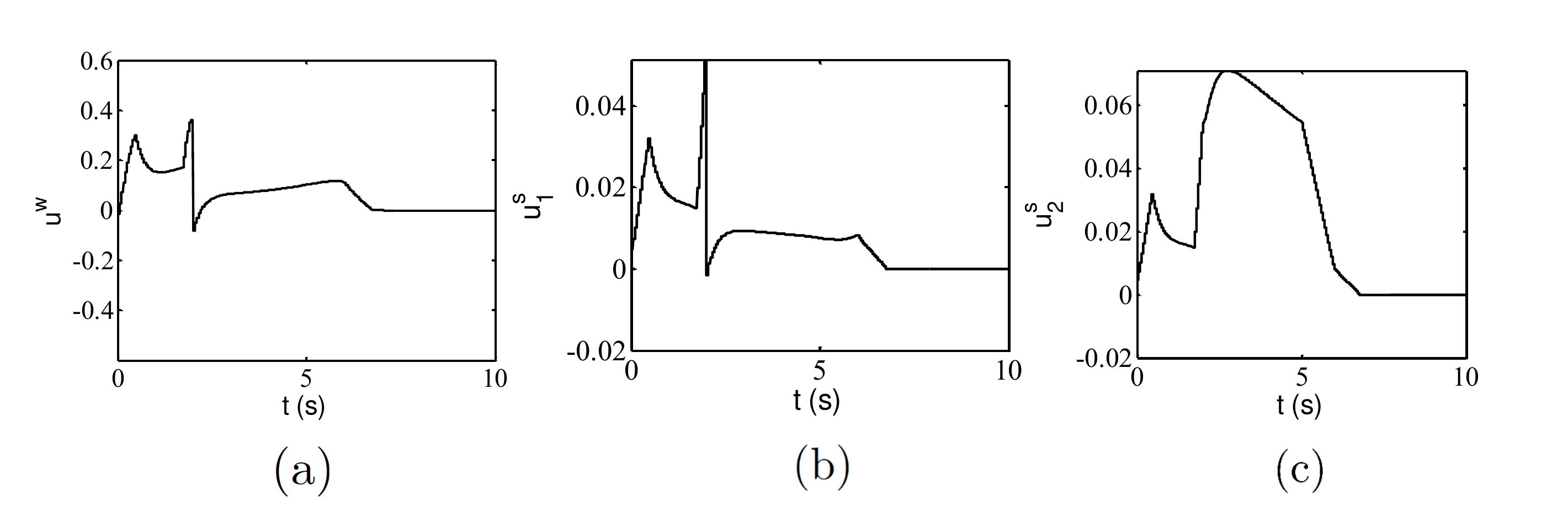}\caption{The synthesized optimal input signals in Scenario II.}
	\label{wind_u2}
\end{figure}

\begin{figure}[th]
	\centering
	\includegraphics[width=6cm]{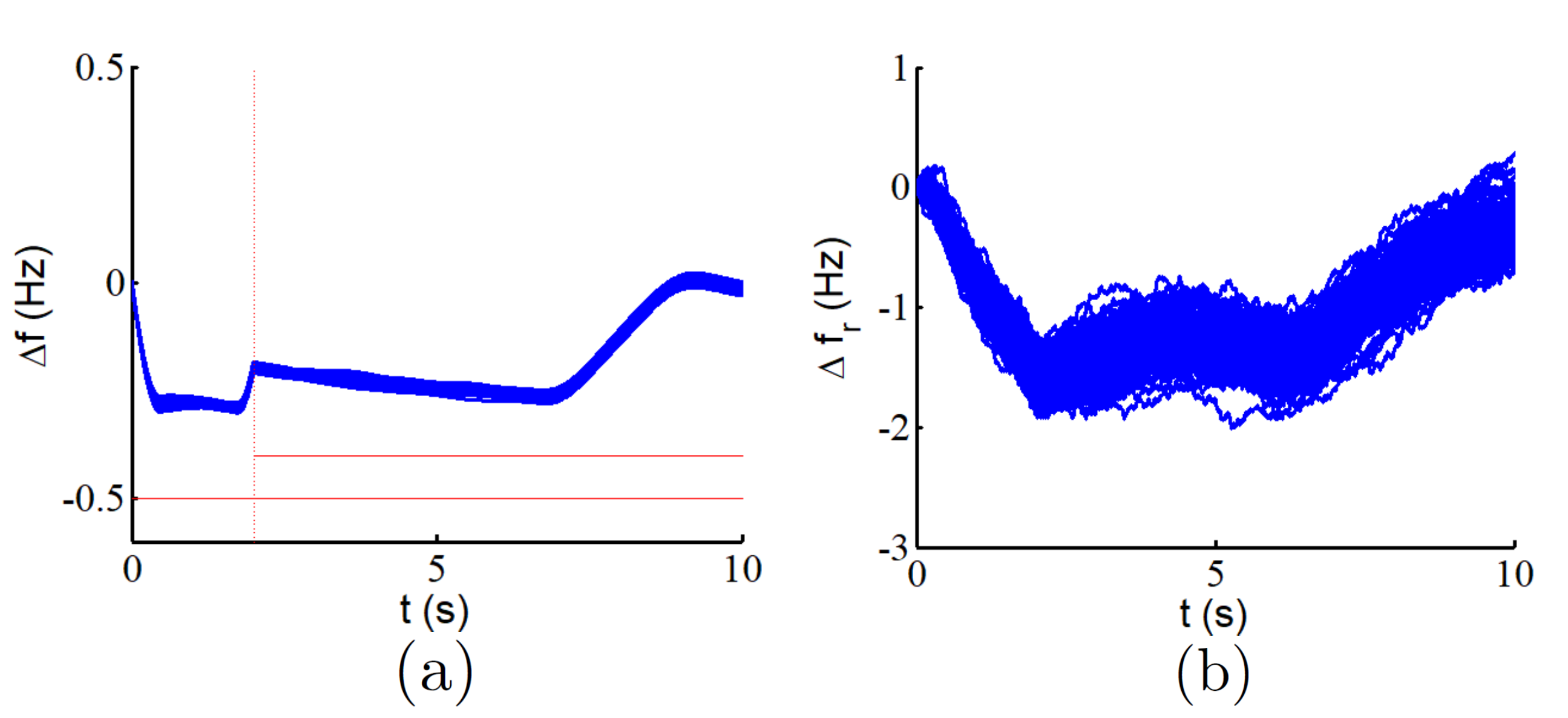}\caption{100 trajectories (realizations) of $\Delta f$ and $\Delta f_r$ with the synthesized control inputs (blue) in Scenario II. The red lines indicate the thresholds in $\bar{\varphi}$.}
	\label{wind_f2}
\end{figure}

\section{CONCLUSIONS}
\label{conclusion}
We presented a provably correct controller synthesis approach for switched stochastic systems with metric temporal logic specifications. We implemented the approach on power systems, while the same approach can be applied in switched control systems in other applications such as robotic systems, communication systems, and biological systems.


\section*{APPENDIX}
	\textbf{Proof of Theorem \ref{th1}}:\\
For the output trajectory $s_{\rho^{\ast}}({\bm\cdot};x^{\ast}_0,u)$ of trajectory $\rho^{\ast}=\{(q^{i},\xi^{\ast}_{q^{i}}(t;x^{\ast 0}_{q^i},u),T^{i})\}_{i=0}^{N_{q}}$ (where $x^{\ast 0}_{q^0}=x^{\ast}_0$) of the switched nominal control system, if $B_{q^{i-1}}(\xi_{q^{i-1}}(T^{i-1},x^{\ast 0}_{q^{i-1}}),r_{q^{i-1}}e^{-\mu_{q^{i-1}} T^{i-1}/2})\subset B_{q^{i}}(x^{\ast 0}_{q^i},$ $r_{q^{i}}) (i=1,2,\dots,N_q)$, then for any $\tilde{x}_0\in B_{q^0}(x^{\ast}_0,r_{q^0})$ and the output trajectory $s_{\tilde{\rho}^{\ast}}({\bm\cdot};\tilde{x}_0,u)$ of trajectory $\tilde{\rho}^{\ast}=\{(q^{i},\xi^{\ast}_{q^{i}}(t;\tilde{x}^{\ast 0}_{q^i},u),T^{i})\}_{i=0}^{N_{q}}$ (where $\tilde{x}^{\ast 0}_{q^0}=\tilde{x}_0$) of the switched nominal control system, we have $\tilde{x}^{\ast0}_{q^i}\in B_{q^i}(x^{\ast 0}_{q^i},r_{q^i})$.

For every $k\in\{1,\dots,\eta\}$, $\nu\in\{1,\dots,n_k\}$ and any $\tilde{x}^{\ast 0}_{q^i}\in B_{q^i}(x^{\ast 0}_{q^i},r_{q^i})$, we have
	\begin{align}
	\begin{split}
	& (\xi_{q^i}^{\ast}(t;\tilde{x}^{\ast 0}_{q^i},u)-\xi_{q^i}^{\ast}(t;x^{\ast 0}_{q^i},u))^{T}a_{k,\nu}a_{k,\nu}^{T}(z_{k,\nu}^{i})^2(\xi_{q^i}^{\ast}(t;\tilde{x}^{\ast 0}_{q^i},u)\\
	&-\xi_{q^i}^{\ast}(t; x^{\ast 0}_{q^i},u))\le (\xi_{q^i}^{\ast}(t;\tilde{x}^{\ast 0}_{q^i},u)-\xi_{q^i}^{\ast}(t;x^{\ast 0}_{q^i},u))^{T}M_{q^i}\\
	&(\xi_{q^i}^{\ast}(t;\tilde{x}^{\ast 0}_{q^i},u)-\xi_{q^i}^{\ast}(t;x^{\ast 0}_{q^i},u))\\&=\psi_{q^i}(\xi_{q^i}^{\ast}(t;\tilde{x}^{\ast 0}_{q^i},u), \xi_{q^i}^{\ast}(t;x^{\ast 0}_{q^i},u))e^{-\mu_{q^i} t}\le r_{q^i}e^{-\mu_{q^i} t}.
	\end{split}  
	\end{align}		
	Therefore, we have $\norm{a_{k,\nu}^{T}(\xi_{q^i}^{\ast}(t;\tilde{x}^{\ast 0}_{q^i},u)-\xi_{q^i}^{\ast}(t;x^{\ast 0}_{q^i},u))}\le\sqrt{r_{q^i}}e^{-\mu_{q^i} t/2}/z^i_{k,\nu}$, thus 
	\begin{align}
	  \begin{split}
	& -\sqrt{r_{q^i}}e^{-\mu_{q^i}t/2}/z^i_{k,\nu}\le a_{k,\nu}^{T}(\xi_{q^i}^{\ast}(t;\tilde{x}^{\ast 0}_{q^i},u)-\xi_{q^i}^{\ast}(t;x^{\ast 0}_{q^i},u))\\&\le\sqrt{r_{q^i}}e^{-\mu_{q^i} t/2}/z^i_{k,\nu}. 
       \end{split} 
	 \label{b1} 
	\end{align}	
	For every $k\in\{1,\dots,\eta\}$, $\nu\in\{1,\dots,n_k\}$ and output trajectory $s_{\tilde{\rho}}({\bm\cdot};\tilde{x}_0,u)$ of trajectory $\tilde{\rho}=\{(q^{i},\xi_{q^{i}}(t;\tilde{x}^{0}_{q^i},u),T^{i})\}_{i=0}^{N_{q}}$ (where $\tilde{x}^{0}_{q^0}=\tilde{x}^{\ast 0}_{q^0}=\tilde{x}_0$) of the switched stochastic control system, if $\sup_{0\leq t\leq T^i}\phi_{q^i}(\xi_{q^i}^{\ast}(t;\tilde{x}^{\ast0}_{q^i},u), \xi_{q^i}(t;\tilde{x}^0_{q^i},u))<\hat{\gamma}$, then $\xi_{q^i}(t;\tilde{x}^0_{q^i},u)\in B_{q^i}(\xi_{q^i}^{\ast}(t;\tilde{x}^{\ast 0}_{q^i},u),\hat{\gamma}e^{-\mu_{q^i} t/2})$, we have
	\begin{align}\nonumber
	\begin{split}
	& (\xi_{q^i}(t;\tilde{x}^0_{q^i},u)-\xi_{q^i}^{\ast}(t;\tilde{x}^{\ast 0}_{q^i},u))^{T}a_{k,\nu}a_{k,\nu}^{T}(z_{k,\nu}^{i})^2(\xi_{q^i}(t;\tilde{x}^0_{q^i},u)\\
	&-\xi_{q^i}^{\ast}(t;\tilde{x}^{\ast 0}_{q^i},u))\le (\xi_{q^i}(t;\tilde{x}^0_{q^i},u)-\xi_{q^i}^{\ast}(t;\tilde{x}^{\ast 0}_{q^i},u))^{T}M_{q^i}\\
	&(\xi_{q^i}(t;\tilde{x}^0_{q^i},u)-\xi_{q^i}^{\ast}(t;\tilde{x}^{\ast 0}_{q^i},u))
		\end{split}  
	\end{align}
		\begin{align}
	\begin{split}
	&=\phi_{q^i}(\xi_{q^i}(t;\tilde{x}^0_{q^i},u), \xi_{q^i}^{\ast}(t;\tilde{x}^{\ast 0}_{q^i},u))e^{-\mu_{q^i} t}\le\hat{\gamma}e^{-\mu_{q^i} t}.
	\end{split}  
	\end{align}
	Therefore, we have $\norm{a_{k,\nu}^{T}(\xi_{q^i}(t;\tilde{x}^0_{q^i},u)-\xi_{q^i}^{\ast}(t;\tilde{x}^{\ast 0}_{q^i},u))}\le\sqrt{\hat{\gamma}}e^{-\mu_{q^i} t/2}/z^i_{k,\nu}$, thus 
	\begin{align}
	\begin{split}
	& -\sqrt{\hat{\gamma}}e^{-\mu_{q^i} t/2}/z_{k,\nu}\le a_{k,\nu}^{T}(\xi_{q^i}(t;\tilde{x}^0_{q^i},u)-\xi_{q^i}^{\ast}(t;\tilde{x}^{\ast 0}_{q^i},u))\\&\le\sqrt{\hat{\gamma}}e^{-\mu_{q^i} t/2}/z^i_{k,\nu}.  
	\label{b2}
	\end{split} 
	\end{align}
	From (\ref{b1}) and (\ref{b2}), we have
	\begin{align}
	\begin{split}
	& -(\sqrt{\hat{\gamma}}+\sqrt{r_{q^i}})e^{-\mu_{q^i} t/2}/z^i_{k,\nu}\le a_{k,\nu}^{T}(\xi_{q^i}(t;\tilde{x}^0_{q^i},u)-\\&\xi_{q^i}^{\ast}(t;x^{\ast 0}_{q^i},u))\le(\sqrt{\hat{\gamma}}+\sqrt{r_{q^i}})e^{-\mu_{q^i} t/2}/z_{k,\nu}. 
	\end{split} 
	\label{b3}
	\end{align}		 	 

If $\left[\left[\varphi_{\hat{\delta}}\right]\right](s_{\rho^\ast}({\bm\cdot};x^{\ast}_0,u), 0)\ge 0$, where $\varphi_{\hat{\delta}}$ is the $\hat{\delta}_{k,\nu}$-robust modified formula of $\varphi$, $\hat{\delta}^i_{k,\nu}=(\sqrt{\hat{\gamma}}+\sqrt{r_{q^i}})/z^i_{k,\nu}$, then for every $k\in\{1,\dots,\eta\}$, $\nu\in\{1,\dots,n_k\}$, $i\in\{1,2,\dots,N_q\}$ (resp. $i=0$), and for any $t$ such that $t+\sum\limits_{j=1}^{i-1}T^j\ge\tau_k$ (resp. $t\ge\tau_k$ when $i=0$), we have $a_{k,\nu}^{T}\xi_{q^i}^{\ast}(t;x^{\ast}_0,u)+c_{k,\nu}^{T}u<b_{k,\nu}-(\sqrt{\hat{\gamma}}+\sqrt{r_{q^i}})e^{-\mu_{q^i} t/2}/z^i_{k,\nu}$. In such conditions, for any $\tilde{x}_0\in B_{q^0}(x^{\ast}_0,r_{q^0})$ (thus $\tilde{x}^{\ast0}_{q^i}\in B_{q^i}(x^{\ast 0}_{q^i},r_{q^i})$), if $\xi_{q^i}(t;\tilde{x}^0_{q^i},u)\in B_{q^i}(\xi_{q^i}^{\ast}(t;\tilde{x}^{\ast 0}_{q^i},u),\hat{\gamma}e^{-\mu_{q^i} t/2})$, we have
	\begin{align}
	\begin{split}\nonumber 
	& a_{k,\nu}^{T}\xi_{q^i}(t;\tilde{x}^0_{q^i},u)+c_{k,\nu}^{T}u< a_{k,\nu}^{T}\xi_{q^i}^{\ast}(t;x^{\ast 0}_{q^i},u)+c_{k,\nu}^{T}u+(\sqrt{\hat{\gamma}}\\&+\sqrt{r_{q^i}})e^{-\mu_{q^i} t/2}/z^i_{k,\nu}<b_{k,\nu}.
	\end{split}
	\end{align}     
Therefore, from the above analysis and (\ref{prob}), for any $\tilde{x}_0\in B_{q^0}(x^{\ast}_0,r_{q^0})$ we have ($0\le t\le T^i$ in the following notations)
\[ 
	\begin{split}
	& P\{\left[\left[\varphi\right]\right](s_{\tilde{\rho}}({\bm\cdot};\tilde{x}_0,u), 0)\ge 0~\vert~\left[\left[\varphi_{\hat{\delta}}\right]\right](s_{\rho^\ast}({\bm\cdot};x^{\ast}_0,u), 0)\ge 0\}\\
	& \ge P\{\forall k, \forall \nu, \forall i,\forall t~\textrm{such~that}~t+\sum\limits_{j=1}^{i-1}T^j\ge\tau_k (\textrm{resp.}~ t\ge\tau_k ~\\&\textrm{when}~ i=0), a_{k,\nu}^{T}\xi_{q^i}(t;\tilde{x}^0_{q^i},u)+c_{k,\nu}^{T}u<b_{k,\nu}~\vert~\left[\left[\varphi_{\hat{\delta}}\right]\right]\\&~~ (s_{\rho^\ast}({\bm\cdot};x^{\ast}_0,u),0)\ge 0\}
	\\&\ge P\{\forall k, \forall \nu, \forall i,\forall t, \norm{a_{k,\nu}^{T}(\xi_{q^i}(t;\tilde{x}^0_{q^i},u)-\xi_{q^i}^{\ast}(t;x^{\ast 0}_{q^i},u))}<\\&~~~(\sqrt{\hat{\gamma}}+\sqrt{r_{q^i}})e^{-\mu_{q^i} t/2}/z_{k,\nu}~\vert\left[\left[\varphi_{\hat{\delta}}\right]\right](s_{\rho^\ast}({\bm\cdot};x^{\ast}_0,u), 0)\ge 0\}
	\\&= P\{\forall k, \forall \nu,\forall i,\forall t,\norm{a_{k,\nu}^{T}(\xi_{q^i}(t;\tilde{x}^0_{q^i},u)-\xi_{q^i}^{\ast}(t;x^{\ast 0}_{q^i},u))}<\\&~~~(\sqrt{\hat{\gamma}}+\sqrt{r_{q^i}})e^{-\mu_{q^i} t/2}/z_{k,\nu}\} 
	\\&\ge P\{\forall k, \forall \nu,\forall i,\forall t,\norm{a_{k,\nu}^{T}(\xi_{q^i}(t;\tilde{x}^0_{q^i},u)-\xi_{q^i}^{\ast}(t;\tilde{x}^{\ast 0}_{q^i},u))}<\\&~~~\sqrt{\hat{\gamma}}e^{-\mu_{q^i} t/2}/z_{k,\nu}\}
	\\&\ge P\{\sup_{0\leq t\leq T^0}\phi_{q^0}(\xi_{q^0}^{\ast}(t;\tilde{x}^{\ast0}_{q^0},u), \xi_{q^0}(t;\tilde{x}^0_{q^0},u))<\hat{\gamma}\}\times\dots\\
	& P\{\sup_{0\leq t\leq T^{N_q}}\phi_{q^{N_q}}(\xi_{q^{N_q}}^{\ast}(t;\tilde{x}^{\ast0}_{q^{N_q}},u), \xi_{q^{N_q}}(t;\tilde{x}^0_{q^{N_q}},u))<\hat{\gamma}\}	\\
	&\ge(1-\frac{\alpha_{q^0} T^0}{\hat{\gamma}})\times(1-\frac{\alpha_{q^1} T^1}{\hat{\gamma}})\times\dots
	\times(1-\frac{\alpha_{q^{N_q}}T^{N_q}}{\hat{\gamma}})	\\
 	&\stackrel{(a)}{\ge} 1-\frac{\alpha_{q^0}T^0+\alpha_{q^1}T^1+\dots \alpha_{q^{N_q}}T^{N_q}}{\hat{\gamma}}	\\
	&\ge 1-\frac{(\max\limits_{i}\alpha_{q^i})\cdot (T^0+T^1+\dots T^{N_q})}{\hat{\gamma}}\\
	&=1-\frac{(\max\limits_{i}\alpha_{q^i})\cdot T_{\textrm{end}}}{\hat{\gamma}}=1-\epsilon.    
	\end{split}  
\]           
The inequality $(a)$ follows from the fact that $(1-c_1)(1-c_2)\dots(1-c_n)\ge 1-(c_1+c_2+\dots+c_n)$ when $0\le c_i\le 1$ $(i=1,2,\dots,n)$, which can be easily proven by induction.

\bibliographystyle{IEEEtran}
\bibliography{zhepowerref}

\begin{thebibliography}{10}
\providecommand{\url}[1]{#1}
\csname url@samestyle\endcsname
\providecommand{\newblock}{\relax}
\providecommand{\bibinfo}[2]{#2}
\providecommand{\BIBentrySTDinterwordspacing}{\spaceskip=0pt\relax}
\providecommand{\BIBentryALTinterwordstretchfactor}{4}
\providecommand{\BIBentryALTinterwordspacing}{\spaceskip=\fontdimen2\font plus
\BIBentryALTinterwordstretchfactor\fontdimen3\font minus
  \fontdimen4\font\relax}
\providecommand{\BIBforeignlanguage}[2]{{%
\expandafter\ifx\csname l@#1\endcsname\relax
\typeout{** WARNING: IEEEtran.bst: No hyphenation pattern has been}%
\typeout{** loaded for the language `#1'. Using the pattern for}%
\typeout{** the default language instead.}%
\else
\language=\csname l@#1\endcsname
\fi
#2}}
\providecommand{\BIBdecl}{\relax}
\BIBdecl

\bibitem{liberzon2003switching}
D.~Liberzon, \emph{{Switching in Systems and Control}}.\hskip 1em plus 0.5em
  minus 0.4em\relax Springer Science \& Business Media, 2003.

\bibitem{Xiang2012}
Z.~Xiang, C.~Qiao, and M.~S. Mahmoud, ``Finite-time analysis and {$H_\infty$}
  control for switched stochastic systems,'' \emph{Journal of the Franklin
  Institute}, vol. 349, no.~3, pp. 915--927, 2012.

\bibitem{zhe_control}
Z.~{Xu}, A.~{Julius}, and J.~H. {Chow}, ``Energy storage controller synthesis
  for power systems with temporal logic specifications,'' \emph{IEEE Systems
  Journal}, vol.~13, no.~1, pp. 748--759, 2019.

\bibitem{zhe_ijcai2019}
Z.~Xu and U.~Topcu, ``Transfer of temporal logic formulas in reinforcement
  learning,'' in \emph{Proc. IJCAI'2019}, 7 2019, pp. 4010--4018.

\bibitem{Verginis2019Icra}
C.~K. {Verginis}, C.~{Vrohidis}, C.~P. {Bechlioulis}, K.~J. {Kyriakopoulos},
  and D.~V. {Dimarogonas}, ``Reconfigurable motion planning and control in
  obstacle cluttered environments under timed temporal tasks,'' in \emph{2019
  International Conference on Robotics and Automation (ICRA)}, May 2019, pp.
  951--957.

\bibitem{Xu2021PLOS}
\BIBentryALTinterwordspacing
Z.~Xu, B.~Wu, and U.~Topcu, ``Control strategies for {COVID-19} epidemic with
  vaccination, shield immunity and quarantine: A metric temporal logic
  approach,'' \emph{PLOS ONE}, vol.~16, no.~3, pp. 1--20, 03 2021. [Online].
  Available: \url{https://doi.org/10.1371/journal.pone.0247660}
\BIBentrySTDinterwordspacing

\bibitem{zheACC2018wind}
Z.~{Xu}, A.~{Julius}, and J.~H. {Chow}, ``Coordinated control of wind turbine
  generator and energy storage system for frequency regulation under temporal
  logic specifications,'' in \emph{2018 Annual American Control Conference
  (ACC)}, 2018, pp. 1580--1585.

\bibitem{Nonlinear2017}
``Temporal logic control for stochastic linear systems using abstraction
  refinement of probabilistic games,'' \emph{Nonlinear Analysis: Hybrid
  Systems}, vol.~23, pp. 230 -- 253, 2017.

\bibitem{Wolff2012}
E.~M. Wolff, U.~Topcu, and R.~M. Murray, ``Robust control of uncertain markov
  decision processes with temporal logic specifications,'' in \emph{2012 IEEE
  51st IEEE Conference on Decision and Control (CDC)}, Dec 2012, pp.
  3372--3379.

\bibitem{Horowitz2014}
M.~B. {Horowitz}, E.~M. {Wolff}, and R.~M. {Murray}, ``A compositional approach
  to stochastic optimal control with co-safe temporal logic specifications,''
  in \emph{2014 IEEE/RSJ International Conference on Intelligent Robots and
  Systems}, 2014, pp. 1466--1473.

\bibitem{AlurDill1990}
R.~Alur, C.~Courcoubetis, and D.~Dill, ``Model-checking for real-time
  systems,'' in \emph{[1990] Proceedings. Fifth Annual IEEE Symposium on Logic
  in Computer Science}, Jun 1990, pp. 414--425.

\bibitem{Fu2015CDC}
J.~Fu and U.~Topcu, ``Computational methods for stochastic control with metric
  interval temporal logic specifications,'' in \emph{IEEE Conference on
  Decision and Control (CDC)}, Dec 2015, pp. 7440--7447.

\bibitem{Anand2019}
M.~{Anand}, P.~{Jagtapt}, and M.~{Zamani}, ``Verification of switched
  stochastic systems via barrier certificates*,'' in \emph{2019 IEEE 58th
  Conference on Decision and Control (CDC)}, 2019, pp. 4373--4378.

\bibitem{FAINEKOScontinous}
G.~E. Fainekos and G.~J. Pappas, ``Robustness of temporal logic specifications
  for continuous-time signals,'' \emph{Theoretical Computer Science}, vol. 410,
  no.~42, pp. 4262 -- 4291, 2009.

\bibitem{Dokhanchi2014}
A.~Dokhanchi, B.~Hoxha, and G.~Fainekos, ``On-line monitoring for temporal
  logic robustness,'' in \emph{Proc. Int. Conf. Runtime Verification, Toronto,
  Canada, 2014}.\hskip 1em plus 0.5em minus 0.4em\relax Springer, pp. 231--246.

\bibitem{Julius2008CDC}
A.~A. Julius and G.~J. Pappas, ``Probabilistic testing for stochastic hybrid
  systems,'' in \emph{2008 47th IEEE Conference on Decision and Control}, Dec
  2008, pp. 4030--4035.

\bibitem{McCormick1976}
G.~P. McCormick, ``Computability of global solutions to factorable nonconvex
  programs: Part {I} --- convex underestimating problems,'' \emph{Mathematical
  Programming}, 1976.

\bibitem{Gupte2013SolvingMI}
A.~Gupte, S.~Ahmed, M.-S. Cheon, and S.~S. Dey, ``Solving mixed integer
  bilinear problems using {MILP} formulations,'' \emph{SIAM J. Optim.},
  vol.~23, pp. 721--744, 2013.

\bibitem{sayan2016}
S.~Saha and A.~A. Julius, ``An {MILP} approach for real-time optimal controller
  synthesis with metric temporal logic specifications,'' in \emph{Proc. IEEE
  Amer. Control Conf.}, July 2016, pp. 1105--1110.

\bibitem{PulgarPhd}
H.~A. Pulgar-Painemal, ``Wind farm model for power system stability analysis,''
  Dissertation, Univ. of Illinois at Urbana-Champaign, Champaign, 2010.

\bibitem{ZhangPulgar2017}
Y.~{Zhang}, K.~{Tomsovic}, S.~M. {Djouadi}, and H.~{Pulgar-Painemal}, ``Hybrid
  controller for wind turbine generators to ensure adequate frequency response
  in power networks,'' \emph{IEEE JETCAS}, vol.~7, no.~3, pp. 359--370, 2017.

\end{thebibliography}
\end{document}